\newcommand{\bX}{\mathbf{X}} 
\newcommand{\by}{\mathbf{y}} 
\newcommand{\bz}{\mathbf{z}} 
\newcommand{\bb}{\mathbf{b}} 
\newcommand{\bbeta}{{\boldsymbol{\beta}}}
\newcommand{\calH}{\mathcal{H}} 
\newcommand{\calK}{\mathcal{K}} 
\newcommand{\calN}{\mathcal{N}} 
\newcommand{\calF}{\mathcal{F}} 
\newcommand{\RR}{\mathbb{R}} 
\newcommand{\EE}{\mathbb{E}}
\newcommand{\PP}{\mathbb{P}}
\newcommand{\tFDP}{\mathrm{FDP}}
\newcommand{\tTPP}{\mathrm{TPP}}
\newcommand{\tFDR}{\mathrm{FDR}}
\newcommand{\fdpinfty}{\mathrm{FDP}^{\infty}}
\newcommand{\tppinfty}{\mathrm{TPP}^{\infty}}
\newcommand{\fdpinftyprime}{\mathrm{FDP}^{ \infty}_{\text{aug}}}
\newcommand{\tppinftyprime}{\mathrm{TPP}^{ \infty}_{\text{aug}}}
\newcommand{\fdphatinftyprime}{\widehat{\mathrm{FDP}}^{ \infty}_{\text{aug}}}
\newcommand{\Xtil}{\widetilde{\mathbf{X}}}
\newtheorem{proposition}{Proposition}[section]
\newtheorem{lemma}[proposition]{Lemma}
\newtheorem{theorem}[proposition]{Theorem}
\DeclareMathOperator*{\argmin}{argmin}
\DeclarePairedDelimiter\ceil{\lceil}{\rceil}
\begin{document}

\title{A Power and Prediction Analysis\\ for Knockoffs with Lasso Statistics}
\author[a]{Asaf Weinstein}
\author[b]{Rina Barber}
\author[a]{Emmanuel Cand\`{e}s}
\affil[a]{Stanford University}
\affil[b]{University of Chicago}
\date{}                     %% if you don't need date to appear

\maketitle
\thispagestyle{empty}

\begin{abstract}
  Knockoffs is a new framework for controlling the false discovery
  rate (FDR) in multiple hypothesis testing problems involving complex
  statistical models.  While there has been great emphasis on Type-I
  error control, Type-II errors have been far less studied.  In this
  paper we analyze the false negative rate or, equivalently, the power
  of a knockoff procedure associated with the Lasso solution path
  under an i.i.d.~Gaussian design, and find that knockoffs
  asymptotically achieve close to optimal power with respect to an
  omniscient oracle.  Furthermore, we demonstrate that for sparse
  signals, performing model selection via knockoff filtering achieves
  nearly ideal prediction errors as compared to a Lasso oracle
  equipped with full knowledge of the distribution of the unknown
  regression coefficients.  The i.i.d.~Gaussian design is adopted to
  leverage results concerning the empirical distribution of the Lasso
  estimates, which makes power calculation possible for both knockoff
  and oracle procedures.
\end{abstract}

%\done{I don't understand why the ordering of the authors has
%  changed. Also, we would need affiliations. Also, we should omit
%  ``Foygel'' and perhaps my middle initial.}

\section{Introduction}
Suppose that for each of $n$ subjects we have measured a set of $p$ features which we want to use in a linear model to explain a  response, and  plan to use the Lasso to select relevant features. 
A desirable procedure controls the (expected) proportion of false
variables entering the explanatory model; that is, the ratio between the number of null coefficients estimated as nonzero by the Lasso procedure and the total number of coefficients estimated as nonzero. 
Conceptually, then, we imagine proceeding along the Lasso path by decreasing the penalty $\lambda$, and need to decide when to stop and reject the hypotheses corresponding to nonzero estimates, in such a way that the false discovery rate (FDR) does not exceed a preset level $q$. 
%This connects our work to that of \citet{tibshirani2016exact} and \citet{fithian2015selective}, who consider hypothesis testing for procedures which sequentially select variables into a model. 
%However, the null hypothesis tested in these papers is different, corresponding to an ``incremental" test versus a ``full-model" test; see the elaborate and illuminating discussion in \citet{fithian2015selective}. 

%\done{I recall Rina commented on the paragraphs below. Has this not
%  been addressed?}  
  The setup above differs from many classical
situations, where we know that the ``null statistics"---the statistics
for which the associated null hypothesis is true---are (approximately)
independent draws from a distribution that is (approximately) known;
in such situations, it is relatively easily to estimate the proportion
of null hypotheses that are rejected.  In stark contrast, in our
setting we do not know the sampling distribution of
$\widehat{\beta}_j(\lambda)$, and, to further complicate matters, this
distribution generally depends on the entire vector $\bbeta$ (this is
unlike the distribution of the least-squares estimator, where
$\widehat{\beta}^{\text{OLS}}_j$ depends only on $\beta_j$).

\citet{barber2015controlling} have proposed a method that utilizes artificial ``knockoff" features, and used it to circumvent the difficulties described above. 
The general idea behind the method is to introduce a set of ``fake" variables (i.e., variables that are known to not belong in the model), in such a way that there is some form of exchangeability under swapping a true null variable and a knockoff counterpart. 
This exchangeability essentially implies that a true null variable and a corresponding knockoff variable are indistinguishable (in the sense that the sufficient statistic is insensitive to swapping the two), hence given the event that one of the two was selected, each is equally probable to be the selected. 
We can then use the number of knockoff selections to estimate the
number of true nulls being selected. 

%\done{Are we accenting my last name or not? Either way, we would need
%  to be consistent.} 
  In the original paper,
\citet{barber2015controlling} considered a Gaussian linear regression
model with a {\it fixed} number $p$ of fixed (nonrandom) predictors,
and $p\leq n$.  Since then, extensions and variations on knockoffs
have been proposed, which made the method applicable to a wide range
of problems.  \citet{barber2016knockoff} considered the high
dimensional ($n<p$) setting, where control over type III error (sign)
is also studied.  \citet{candes2016panning} constructed
``probabilistic'' model-X knockoffs for a nonparametric setting with
random covariates.  A group knockoff filter has been proposed by
\citet{dai2016knockoff} for detecting relevant groups of variables in
a linear model.  The focus in these works is on constructing knockoff
variables with the necessary properties to enable control over the
false discovery rate.  On the other hand, the Type II error rate has
been less carefully examined.

Power calculations are in general hard, especially when sophisticated test statistics, such as the Lasso estimates, are used. 
There is, however, a specific setting that enables us to exactly characterize (in some particular sense) the asymptotic distribution of the Lasso estimator relying on the theory of Approximate Message-Passing (AMP). 
%a specific setting where it is possible to obtain an analytical benchmark for comparison.  
More specifically, we work in the setup of
\citet{bayati2012lasso} which entails an $n\times p$ Gaussian design
matrix with $n,p\to \infty$ and $n/p\to \delta>0$; and the regression
coefficients are i.i.d.~copies from a mixture distribution
$\Pi = (1-\epsilon)\delta_{0} + \epsilon \Pi^*$, where $\Pi^*$ is the
distribution of the nonzero components\footnote{With some abuse of notation, in
  this paper we use $\Pi$ and $\Pi^*$ to refer to either a
  distribution or a {random variable} that has the corresponding
  distribution.} (i.e., has no point mass at zero).  We consider a
procedure that enters variables into the model according
to their order of (first) appearance on the Lasso path.
%, where a stopping time needs to be specified such that the FDR is controlled at a pre-specified level. 
Appealing to the results in \citet{bayati2012lasso}, we identify an oracle procedure that, with the knowledge of the distribution $\Pi$ (and the noise level $\sigma^2$), is able to asymptotically attain maximum true positive rate subject to controlling the FDR. 
We show that a procedure which uses knockoffs for calibration, and hence a different solution path, is able to strictly control the FDR and at the same time asymptotically achieve power very close to that of the oracle \emph{adaptively in the unknown $\Pi$}.

Figure \ref{fig:intro} shows power of (essentially) the level-$q$ knockoff procedure proposed in Section \ref{sec:knockoffs} versus that of the level-$q$ oracle procedure, as $q$ varies and for two different distributions $\Pi^*$. 
The top two plots are theoretical asymptotic predictions, and the bottom plots show simulation counterparts. 
As implied above, the oracle gets to use the knowledge of $\Pi^*$ and $\epsilon$ in choosing the value of the Lasso penalty parameter $\lambda$, so that the asymptotic false discovery proportion is $q$; by contrast, the knockoff procedure is applied in exactly the same way in both situations, that is, it does not utilize any knowledge of $\Pi^*$ or $\epsilon$. 
Yet the top two plots in Figure \ref{fig:intro} show that the knockoff procedure adapts well to the unknown true distribution of the coefficients, at least for FDR levels in the range that would typically be of interest, in the sense that it blindly achieves close to optimal asymptotic power. 
Figures \ref{fig:tradeoff-exp} and \ref{fig:tradeoff-distributions} of Section \ref{fig:tradeoff-exp} give another representation of these findings, which are the main results of the paper.  
%See caption for an explanation of how the plot was generated. 

\begin{figure}[]
\centering
 \includegraphics[width=0.7\textwidth]{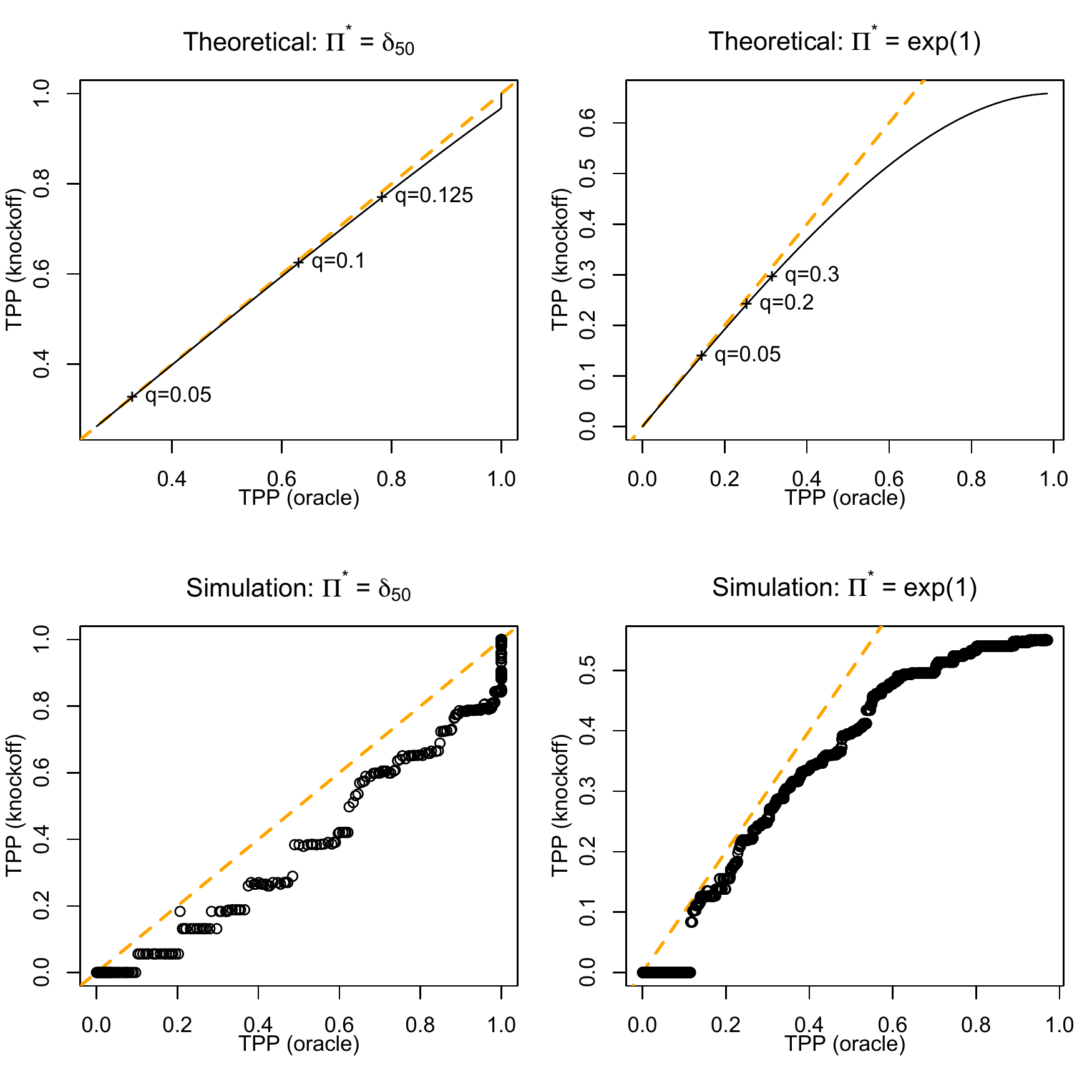}
 \caption{Asymptotic power of the knockoff procedure as compared to
   the oracle, for two different distributions $\Pi^*$: point mass
   (left) and exponential (right).
   $\epsilon = 0.2, \delta = 1, \sigma = 0.5$.  {\it Top}: Theoretical
   prediction.  Each of the points marked on the top graphs
   corresponds to a particular value of $q$, and was generated as
   follows.  (i) Fix $q$; (ii) find $\lambda$ and $t'$
   s.t. $\fdpinfty(t) = q$ and $\fdphatinftyprime(t') = q$, where
   $\fdpinfty$ and $\fdphatinftyprime$ are given in
   \eqref{eq:fdp-infty} and \eqref{eq:fdp-hat-infty-pi-naught-hat};
   (iii) plot the point
   $\left( \tppinfty(t),\tppinftyprime(t') \right)$, where $\tppinfty$
   and $\tppinftyprime$ are given in \eqref{eq:tpp-infty} and
   \eqref{eq:tpp-infty-prime}.  The broken line is the 45 degrees
   line.  The knockoff procedure adapts to an unknown $\Pi^*$: in both
   situations the asymptotic power (TPP) of the knockoffs procedure is
   very close to that of the oracle procedure.  {\it Bottom}:
   Simulation results, $\epsilon = 0.2$; $n=p=5000$; $\sigma = 0.5$.
   Each of the two graphs is meant to be an analogue of the
   asymptotic limits above.  }
\label{fig:intro}
\end{figure}

While the main focus is on the testing problem, we
address also the issue of model selection and prediction.  Motivated
by the work of \citet{abramovich2006special}, which formally connected
FDR control and estimation in the sequence model, our aim is to
investigate whether favorable properties associated with FDR control
carry over to the linear model.  As \citet{abramovich2006special}
remark at the outset, in the special case of the sequence model, a
full decision-theoretic analysis of the predictive risk is possible.
For the linear model case, \citet{bogdan2015slope} proposed a
penalized least squares estimator, SLOPE, {motivated} from the
perspective of multiple testing and FDR control, and
\citet{su2016slope} even proved that this estimator is asymptotically
minimax under the same design as that considered in the current paper,
although in a different sparsity regime, namely $\epsilon\to 0$.  In
truth, the linear model is substantially more difficult to analyze
than the sequence model.  Fortunately, however, the specific working
assumptions we adopt allow to analyze the (asymptotic) predictive
risk, again relying on the elegant results of \citet{bayati2012lasso} for
the Lasso estimates.  While our results lack the rigor of those in,
e.g., \citet{abramovich2006special} and \citet{su2016slope}, we think
that the findings reported in Section \ref{sec:prediction} are
valuable in investigating a problem for which conclusive answers are
currently not available.

The rest of the paper is organized as follows. 
Section \ref{sec:tradeoff} reviews some relevant results from \citet{su2017false} and presents the oracle false discovery rate-power tradeoff diagram for the Lasso. 
In Section \ref{sec:knockoffs} we present a simple testing procedure which uses knockoffs for FDR calibration, and prove some theoretical guarantees. 
A power analysis, which compares the knockoff procedure to the oracle procedure, is carried out in Section \ref{sec:power}. 
Section \ref{sec:prediction} is concerned with using knockoffs for prediction, and Section \ref{sec:discussion} concludes with a short discussion.

\section{An oracle tradeoff diagram}\label{sec:tradeoff}
Adopting the setup from \citet{su2017false}, we consider the linear
model 
\begin{equation}\label{eq:model}
\by = \bX\bbeta + \bz
\end{equation}
in which $\bX\in \RR^{n\times p}$ has i.i.d.~$\calN(0, 1/n)$ entries
and the errors $z_i$ are i.i.d.~$\calN(0,\sigma^2)$ with
$\sigma\geq 0$ fixed and arbitrary.  We assume that the regression
coefficients $\beta_j,\ j=1,...,p$ are i.i.d.~copies of a random
variable $\Pi$ with $\EE(\Pi^2)<\infty$ and
$\PP(\Pi\neq 0) = \epsilon\in (0,1)$ for a fixed constant $\epsilon$.
In this section we will be interested in the case where
$n,p\to \infty$ with $n/p \to \delta$ for a positive constant
$\delta$.  Note that $\Pi$ is assumed to not depend on $p$, and so the
expected number of nonzero elements $\beta_j$ is equal to
$\epsilon \cdot p$.

\noindent Let $\widehat{\bbeta}(\lambda)$ be the Lasso solution, 
\begin{equation}\label{eq:lasso}
\widehat{\bbeta}(\lambda) = \displaystyle \argmin_{\bb\in \RR^p} \frac{1}{2}\|\by - \bX\bb\|^2 + \lambda \|\bb\|_1,
\end{equation}
and denote $V(\lambda) = |\{j: \widehat{\beta}_j(\lambda) \neq 0, \beta_j = 0\}|$, $T(\lambda) = |\{j: \widehat{\beta}_j(\lambda) \neq 0, \beta_j \neq 0\}|$, $R(\lambda) = |\{j: \widehat{\beta}_j(\lambda)\neq 0\}|$ and $k = |\{j: \beta_j \neq 0\}|$. 
Hence $V(\lambda)$ is regarded as the number of false `discoveries' made by the Lasso; $T(\lambda)$ is the number of true discoveries; $R(\lambda)$ is the total number of discoveries, and $k$ is the number of true signals. 
We would like to remark here that $\beta_j = 0$ implies that $\bX_j$ (the $j$-th variable) is independent of $\by$ marginally {\it and} conditionally on any subset of $X_{-j}:=\{X_1,...,X_{j-1},X_{j+1},...,X_p\}$; hence the interpretation of rejecting the hypothesis $\beta_j = 0$ as a false discovery is clear and unambiguous. 
The false discovery proportion (FDP) is defined as usual as  
\begin{equation*}
\tFDP(\lambda) = \frac{V(\lambda)}{1\vee R(\lambda)}
\end{equation*}
and the true positive proportion (TPP) is defined as 
\begin{equation*}
\tTPP(\lambda) = \frac{T(\lambda)}{1\vee k}. 
\end{equation*}

\citet{su2017false} build on the results in \citet{bayati2012lasso} to devise a tradeoff diagram between TPP and FDP in the linear sparsity regime. 
Lemma A.1 in \citet{su2017false}, which is adopted from \citet{bogdan2013supplementary} and is a consequence of Theorem 1.5 in \citet{bayati2012lasso}, predicts the limits of FDP and TPP at a fixed value of $\lambda$. 
Throughout, let $\eta_t(\cdot)$ be the soft-thresholding operator, given by $\eta_{t}(x) = \mathrm{sgn}(x)(|x|-t)_+$. 
Also, let $\Pi^*$ denote a random variable distributed according to the conditional distribution of $\Pi$ given $\Pi \neq 0$; that is, 
\begin{equation}
\Pi = 
\begin{cases}
\Pi^*, &\text{w.p.} \ \epsilon, \\
0, &\text{w.p.} \ 1-\epsilon. 
\end{cases}
\end{equation}
Finally, denote by $\alpha_0$ the unique root of the equation
$(1+t^2)\Phi(-t) - t\phi(t) = \delta/2$.

\begin{lemma}[Lemma A.1 in \citealp{su2017false}; Theorem 1 in \citealp{bogdan2013supplementary}; see also Theorem 1.5 in \citealp{bayati2012lasso}]\label{lem:infty}
The Lasso solution with a fixed $\lambda>0$ obeys
\begin{equation*}
\begin{aligned}
\frac{V(\lambda)}{p}&\stackrel{\PP}{\to}2(1-\epsilon)\Phi(-\alpha),\\
\frac{T(\lambda)}{p}&\stackrel{\PP}{\to}\PP(|\Pi + \tau W|>\alpha\tau, \Pi \neq 0) = \epsilon \PP(|\Pi^* + \tau W|>\alpha\tau), 
\end{aligned}
\end{equation*}
where $W\sim \calN(0,1)$ independently of $\Pi$, and $\tau>0,\ \alpha >\max\{\alpha_0, 0\}$ is the unique solution to 
\begin{equation}\label{eq:system}
\begin{aligned}
\tau^2 &= \sigma^2 + \frac{1}{\delta}\EE(\eta_{\alpha\tau}(\Pi + \tau W) - \Pi)^2 \\
\lambda &= \left( 1 - \frac{1}{\delta}\PP(|\Pi + \tau W|>\alpha\tau) \right) \alpha\tau.
\end{aligned}
\end{equation}
\end{lemma}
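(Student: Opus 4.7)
The plan is to derive both limits from the AMP characterization of the Lasso in Theorem~1.5 of \citet{bayati2012lasso}, which is already stated in the excerpt's setup. The content of that theorem is that, under the i.i.d.\ Gaussian design and the prior $\Pi$, the empirical joint distribution of $(\widehat{\bbeta}(\lambda), \bbeta)$ converges (in the sense of pseudo-Lipschitz test functions) to that of $(\eta_{\alpha\tau}(\Pi+\tau W),\Pi)$, where $W\sim\calN(0,1)$ is independent of $\Pi$ and $(\alpha,\tau)$ is determined by the fixed-point system~\eqref{eq:system}. Concretely, for any pseudo-Lipschitz $\psi:\RR^2\to\RR$,
\[
\frac{1}{p}\sum_{j=1}^p \psi(\widehat{\beta}_j(\lambda),\beta_j)\ \stackrel{\PP}{\to}\ \EE\bigl[\psi(\eta_{\alpha\tau}(\Pi+\tau W),\Pi)\bigr].
\]

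First I would rewrite the two quantities of interest as empirical averages of indicators:
\[
\frac{V(\lambda)}{p}=\frac{1}{p}\sum_{j=1}^p \mathbf{1}\{\widehat{\beta}_j(\lambda)\neq 0\}\,\mathbf{1}\{\beta_j=0\},\qquad
\frac{T(\lambda)}{p}=\frac{1}{p}\sum_{j=1}^p \mathbf{1}\{\widehat{\beta}_j(\lambda)\neq 0\}\,\mathbf{1}\{\beta_j\neq 0\}.
\]
If one could apply the AMP convergence with $\psi(u,v)=\mathbf{1}\{u\neq 0\}\mathbf{1}\{v=0\}$ (resp.\ $\mathbf{1}\{v\neq 0\}$), the right-hand side would immediately become $\PP(\eta_{\alpha\tau}(\tau W)\neq 0)\cdot(1-\epsilon)=2(1-\epsilon)\Phi(-\alpha)$ (resp.\ $\epsilon\PP(|\Pi^*+\tau W|>\alpha\tau)$, using that $\eta_{\alpha\tau}(x)\neq 0\iff|x|>\alpha\tau$). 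This is precisely the claimed limit.

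The main obstacle is that the indicator $u\mapsto\mathbf{1}\{u\neq 0\}$ is not pseudo-Lipschitz, so the AMP theorem cannot be invoked literally. The standard remedy, which I would carry out, is a sandwich argument: approximate $\mathbf{1}\{u\neq 0\}$ from below and above by Lipschitz functions $\psi^-_\eta(u)$ and $\psi^+_\eta(u)$ that agree with it outside $(-\eta,\eta)$ (e.g.\ piecewise linear ramps). Apply the AMP convergence to each of $\psi^\pm_\eta(u)\mathbf{1}\{v=0\}$ (the factor in $v$ is harmless since $\beta_j$ itself is bounded in moments and one can further truncate $v$ if needed). Taking $\eta\downarrow 0$ then forces both bounds to $\PP(|\Pi+\tau W|>\alpha\tau,\Pi=0)=2(1-\epsilon)\Phi(-\alpha)$, provided the limiting law of $\eta_{\alpha\tau}(\Pi+\tau W)$ places no mass in a shrinking neighborhood of $0$ (excluding the atom at $0$ itself). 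Since $\eta_{\alpha\tau}(\Pi+\tau W)$ has a density on $\RR\setminus\{0\}$ inherited from the Gaussian convolution, the contribution of $|u|\in(0,\eta)$ vanishes as $\eta\downarrow 0$, and the sandwich closes. The same approximation argument handles $T(\lambda)/p$.

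Finally, one invokes the existence and uniqueness part of the fixed-point system~\eqref{eq:system}, which is established in \citet{bayati2012lasso} under the condition $\alpha>\max\{\alpha_0,0\}$; this is what makes $(\alpha,\tau)$ a well-defined function of $\lambda$ and turns the convergence statements into the clean form quoted in the lemma. Assembling the two indicator limits yields the two displayed equations and completes the argument.
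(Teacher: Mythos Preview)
The paper itself offers no proof of this lemma; it is quoted from \citet{su2017false} and \citet{bogdan2013supplementary} as a consequence of Theorem~1.5 in \citet{bayati2012lasso}, with only the one-sentence gloss that $(\bbeta,\widehat{\bbeta}(\lambda))$ is ``in some (limited) sense asymptotically distributed as $(\bbeta,\eta_{\alpha\tau}(\bbeta+\tau\mathbf{W}))$.'' Your outline is precisely the route those references take, so in that sense the approaches coincide.

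One point deserves more care than your sketch gives it. A two-sided Lipschitz sandwich of $\mathbf{1}\{u\neq 0\}$ does not exist: since this indicator equals $1$ on the dense set $\RR\setminus\{0\}$, any continuous majorant must be identically $\ge 1$, so the upper half of your sandwich collapses to the trivial bound $\limsup R(\lambda)/p\le 1$. The same obstruction hits the factor $\mathbf{1}\{v=0\}$, compounded by the atom of $\Pi$ at $0$, so ``truncating $v$'' is not as harmless as you suggest. The cited proofs close this gap not by a pure pseudo-Lipschitz sandwich but by invoking a separate statement in \citet{bayati2012lasso} that directly identifies the limiting sparsity $R(\lambda)/p\to\PP(|\Pi+\tau W|>\alpha\tau)$ (this uses the KKT characterization of the Lasso support together with the AMP iterates, not merely pseudo-Lipschitz test functions). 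Once $R/p$ is pinned down, your one-sided Lipschitz lower bounds for $T/p$ and $V/p$---combined with the law of large numbers for the i.i.d.\ $\beta_j$ and the assumption that $\Pi^*$ has no mass at zero---finish the job via $V=R-T$. The architecture of your argument is right; you just need to cite the sparsity result rather than claim a two-sided sandwich that cannot be built.
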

As explained in \citealp{su2017false}, Lemma \ref{lem:infty} is a consequence of the fact that, under the working assumptions, $(\bbeta, \widehat{\bbeta}(\lambda))$ is in some (limited) sense asymptotically distributed as $(\bbeta, \eta_{\alpha\tau}(\bbeta + \tau\mathbf{W}))$, where $\mathbf{W} \sim\calN_p(\mathbf{0}, \mathbf{I})$ independently of $\bbeta$; here, the soft-thresholding operation acts on each component of a vector.  %(here $\alpha$ and $\tau$ are the solution to \eqref{eq:system}). 

It follows immediately from Lemma \ref{lem:infty} that for a fixed $\lambda > 0$, the limits of FDP and TPP are
\begin{equation}\label{eq:fdp-infty}
\tFDP(\lambda) \stackrel{\PP}{\longrightarrow} \frac{2(1-\epsilon)\Phi(-\alpha)}{2(1-\epsilon)\Phi(-\alpha) + \epsilon \PP(|\Pi^* + \tau W|>\alpha\tau)} \equiv \fdpinfty(\lambda)
\end{equation}
and
\begin{equation}\label{eq:tpp-infty}
\tTPP(\lambda) \stackrel{\PP}{\longrightarrow} \PP(|\Pi^* + \tau W|>\alpha\tau) \equiv \tppinfty(\lambda).
\end{equation}

\medskip 

The parametric curve $(\fdpinfty(\lambda), \tppinfty(\lambda))$ implicity defines the FDP level attainable at a specific TPP level for the Lasso solution with a fixed $\lambda$ as
\begin{equation}\label{eq:tradeoff-oracle}
q^{\Pi^*}(\tppinfty(\lambda); \epsilon, \delta, \sigma) = \fdpinfty(\lambda). 
\end{equation}
Interpreted in the opposite direction, for known $\Pi^*$ and
$\epsilon$, we have an {\it analytical expression} for the asymptotic
power of a procedure that, for a given $q$, chooses $\lambda$ in
\eqref{eq:lasso} in such a way that $\fdpinfty(\lambda) = q$. 
We describe how to compute the tradeoff curve $q^{\Pi^*}(\cdot ; \epsilon, \delta, \sigma)$ for a given $\Pi^*$ and $\epsilon$ in the Appendix. 

Because in practice $\Pi^*$ and $\epsilon$ would usually be unknown,
we refer to this procedure as the {\it oracle} procedure.  The oracle
procedure is not a ``legal'' multiple testing procedure, as it
requires knowledge of $\epsilon, \Pi^*$ and $\sigma^2$, which is of course realistically unavailable.  Still, it serves as a reference for evaluating
the performance of valid testing procedures.  Competing with this
oracle is the subject of the current paper, and the particular working
assumptions above were chosen simply because it is possible under
these assumptions to obtain a tractable expression for the empirical
distribution of the Lasso solution.

Before proceeding, we would like to mention another oracle procedure. 
Hence, consider the procedure which for $\lambda$ uses the value 
\begin{equation}
\lambda^* = \inf\{\lambda: \tFDP(\lambda)\leq q\}. 
\end{equation}
This can be thought of as an oracle that is allowed to see the actual
$\bbeta$ but is still committed to the Lasso path, and stops the last
``time" the FDP is smaller than $q$.  From now on, we think of
$\lambda$ as decreasing in time, and regard a variable $j$ as entering
{\it before} variable $j'$ if
$\sup \{\lambda: \widehat{\beta}_{j'}(\lambda)\neq 0\} \leq \sup
\{\lambda: \widehat{\beta}_{j}(\lambda)\neq 0\}$.
Although this may appear as a stronger oracle in some sense, it is
asymptotically equivalent to the ``fixed-$\lambda$" oracle discussed
earlier: by Theorem 3 in \citet{su2017false}, the event
\begin{equation}
\bigcap_{\lambda>0.01} \left\{ \tFDP(\lambda) \geq q^{\Pi^*}(\tTPP(\lambda) - 0.001; \epsilon, \delta, \sigma) \right\}
\end{equation}
has probability tending to one, hence the lower bound on FDP provided by $q^{\Pi^*}(\cdot; \epsilon, \delta, \sigma)$ holds uniformly in $\lambda$. 
%provides a lower bound on the attainable TPP for a given FDP level, which applies also to a procedure that uses a {\it data-dependent} value $\tau(\by, \bX)>0$ for $\lambda$. 

%Since the function $q^{\Pi}(\cdot; \delta, \sigma)$ is monotone, a consequence of Theorem 3 and Lemma A.1 in \citet{su2017false} is as follows. 
%For fixed $q\in (0,1)$, suppose that $\lambda = \lambda(q)>0$ is such that $\fdpinfty(\lambda) = q$. 
%Then the Lasso procedure with $\lambda$ has asymptotic FDP equal to $q$ and asymptotic TPP equal to $\tppinfty(\lambda)$. 
%Furthermore, no procedure that chooses $\lambda$ adaptively as $\tau(\by, \bX)>0$ and achieves limiting TPP equal to $\tppinfty(\lambda(q))$, can have smaller $FDP$ than $q$. 
%For a fixed $q\in (0,1)$, we will therefore refer to the Lasso procedure with fixed $\lambda = \lambda(q)$ as the {\it oracle} procedure. 
%Crucially, note that $\lambda(q) = \lambda(q;\Pi)$. 

\section{Calibration using knockoffs}\label{sec:knockoffs}
%The previous section discusses an oracle rule that chooses a fixed value of lambda---which depends on the distribution $\Pi^*$ and on $\epsilon$---to achieve exact asymptotic FDP control, and attains the corresponding power specified through the function $q^{\Pi^*}$. 
%In practice, we will usually not know $\Pi^*$ and $\epsilon$, and would like to consider procedure that can control $\mathrm{FDR}(\lambda) = \EE(\tFDP(t))$ for finite $n,p$ and without using $\Pi^*, \epsilon$. 
If we limit ourselves to procedures that use the Lasso with a fixed---or even data-dependent---value of $\lambda$, from the previous section we see that the achievable TPP can never asymptotically exceed that of the oracle procedure. 
Hence the asymptotic power of the oracle procedure serves as an asymptotic benchmark. 
In this section we propose a competitor to the oracle, which utilizes knockoffs for FDR calibration. 
While the procedure presented below is sequential and formally does not belong to the class of Lasso estimates with a data-dependent choice of $\lambda$, it will be almost equivalent to the corresponding procedure belonging to that class. 
We discuss this further later on. 

\begin{subsection}{A knockoff filter for the i.i.d.~design}\label{subsec:iid}
Let the model be given by \eqref{eq:model} as before. 
In this section, unless otherwise specified, it suffices to assume that the entries of $\bX$ are i.i.d.~from an arbitrary known continuous distribution $G$ (not necessarily normal). 
Furthermore, unless otherwise indicated, in this section we treat $n,p$ as fixed and condition throughout on $\bbeta$. 
{\it The definitions in the current section override the previous definitions, if there is any conflict. }

Let $\Xtil \in \RR^{n \times r}$ be a matrix with i.i.d.~entries drawn from $G$ independently of $\bX$ and of all other variables, where $r$ is a fixed positive integer. 
Next, let 
\begin{equation*}
\mathbb{X} := [\bX \ \Xtil] \in \RR^{n\times (p+r)}
\end{equation*}
denote the augmented design matrix, and consider the Lasso solution for the augmented design,
\begin{equation}\label{eq:lasso-til}
\widehat{\bbeta}(\lambda) = \displaystyle \argmin_{\bb\in \RR^{p+r}} \frac{1}{2}\|\by - \mathbb{X} \bb\|^2 + \lambda \|\bb\|_1.
\end{equation}
For simplicity we keep the notation $\widehat{\bbeta}(\lambda)$ for the solution corresponding to the augmented design, although it is of course different from \eqref{eq:lasso} (for one thing, \eqref{eq:lasso-til} has $p+r$ components versus $p$ for \eqref{eq:lasso}). 
%From here on, unless otherwise specified, $\widehat{\bbeta}(\lambda)$ will denote the solution to \eqref{eq:lasso-til} and $\widehat{\beta}_j(\lambda),\ j=1,...,p+r$, its components. 
Let
\begin{equation*}
\calH = \{1,...,p\},\ \ \ \calH_0 = \{j\in \calH: \beta_j = 0\},\ \ \ \calK_0 = \{p+1,...,p+r\}
\end{equation*}
be the sets of indices corresponding to the original variables, the null variables and the knockoff variables, respectively. 
Note that, because we are conditioning on $\bbeta$, the set $\calH_0$ is nonrandom. 
Now define the statistics
\begin{equation}\label{eq:stat}
T_j = \sup\{\lambda: \widehat{\beta}_j(\lambda) \neq 0\},\ \ j=1,...,p+r. 
\end{equation}
Informally, $T_j$ measures how early the $j$th variable enters the Lasso path (the larger, the earlier). 

Let 
\begin{equation*}
V_0(\lambda) = |\{j\in \calH_0: T_j \geq \lambda\}|,\ \ 
V_1(\lambda) = |\{j\in \calK_0: T_j \geq \lambda\}|
\end{equation*}
be the number of true null and fake null variables, respectively, which enter the Lasso path ``before" time $\lambda$. 
Also, let
\begin{equation*}
R(\lambda) = |\{j\in \calH: T_j \geq \lambda\}|
\end{equation*}
be the total number of original variables entering the Lasso path before ``time" $\lambda$. 
A visual representation of the quantities defined above is given in Figure \ref{fig:illustration}.

\begin{figure}[H]
\centering
 \includegraphics[width=.75\textwidth]{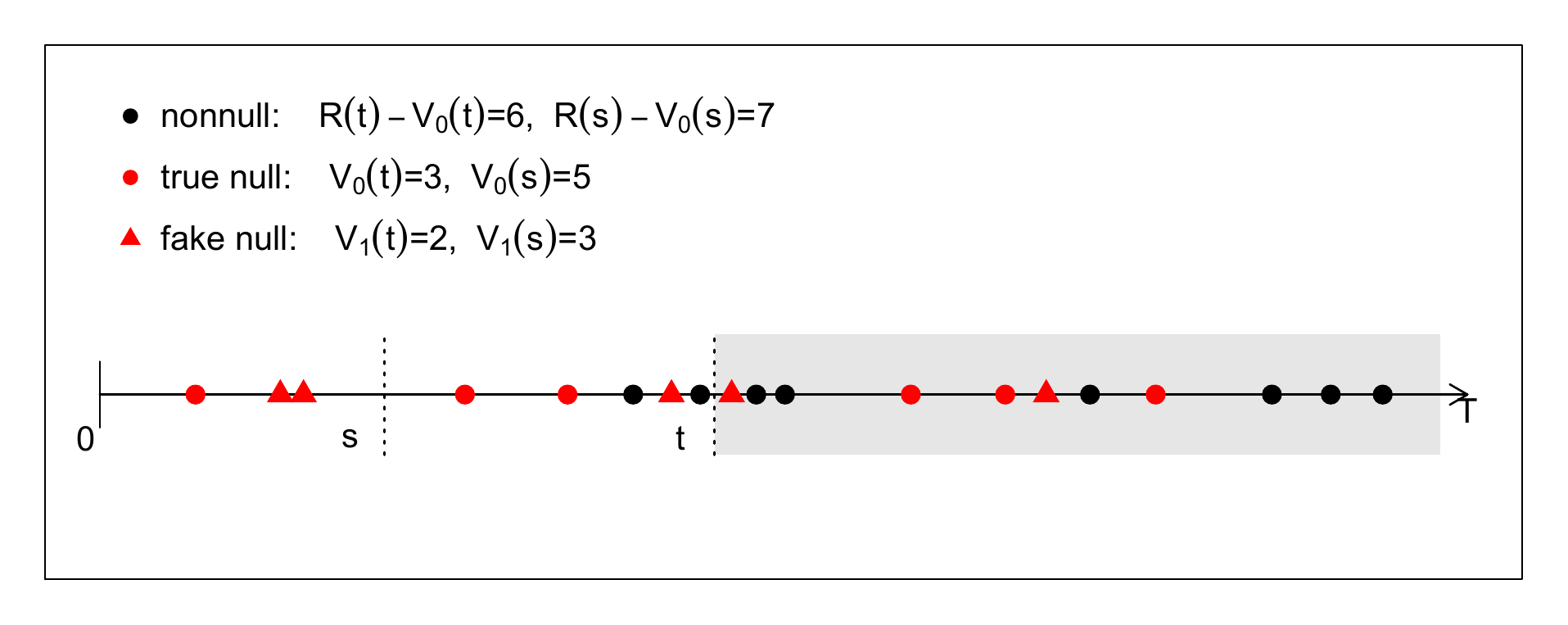}
\caption{
Representation of test statistics. 
Each $T_j$ is represented by a marker: round markers correspond to original variables ($\calH$), triangles represent fakes (knockoffs, $\calK_0$); black corresponds to nonnull ($\calH\setminus\calH_0$), red represents null (true or fake, $\calH_0\cup \calK_0$).  
%Illustration of the knockoff filter for i.i.d.~design. 
%Suppose that there are as many knockoffs as original variables ($r=p$), and only the 19 largest statistics are shown (as markers). 
%Round markers represent original variables, triangles represent fake (knockoffs); black represents nonnull, red represents null (true or fake).  
%At point $\lambda$, there are $V_0(t) = 3$ (unobservable) true nulls, and $V_1(t) = 2$ (observable) fake nulls, with statistics bigger than $\lambda$. 
%There is a total of $R(t) = 9$ original variables that are bigger than $\lambda$. 
%Because the red markers appear in a random order, the FDP can be estimated by $V_1(t)/R(t) = 2/9$. 
%The actual FDP (unobservable) is $V_0(t)/R(t) = 3/9$.
}
\label{fig:illustration}
\end{figure}

The class of procedures we propose (henceforth, knockoff procedures) reject the null hypotheses corresponding to
\begin{equation*}
\{j \in \calH: T_j\geq \widehat{\lambda}\},
\end{equation*}
where 
\begin{equation}\label{eq:lambda-hat}
\widehat{\lambda} = \inf\left\{ \lambda \in \Lambda: \frac{ (1 + V_1(\lambda)) \cdot \frac{|\calH| \widehat{\pi}_0}{1+|\calK_0|} }{R(\lambda)} \leq q\right\}
\end{equation}
for an estimate $\widehat{\pi}_0$ of $\pi_0 := |\calH_0|/|\calH|$ and for a set $\Lambda$ associated with $\widehat{\pi}_0$. 
Note that conditionally on $\bbeta$, $\pi_0$ is a (nonrandom) parameter, and in this section we speak of estimating $\pi_0$ rather than $1-\epsilon$. 
Hence the tests in this class are indexed by the estimate of $\pi_0$ (and by $\Lambda$). 
%\footnote{while $\pi_0$ is a random variable, our analysis below is conditional on $\calH_0$, in which case $\pi_0$ becomes a parameter.}.
Denote by $\mathbb{X}_j, \ j=1,...,p+r$, the columns of the augmented
matrix $\mathbb{X}$.  To see why the procedure makes sense, observe
that, conditionally on $\bbeta$, the distribution of
$(\mathbb{X}, \by)$ is invariant under permutations of the variables
in the set $\{\calH_0\cup \calK_0\}$ (just by symmetry).  Therefore,
conditionally on
$|\{j\in \calH_0\cup \calK_0: T_j \geq \lambda\}| = k$, any subset of
size $k$ of $\{\calH_0\cup \calK_0\}$ is equally probable to be the
selected set $\{j\in \calH_0\cup \calK_0: T_j \geq \lambda\}$.  It
follows that
\begin{equation*}
V_0(\lambda) / |\calH_0| \approx V_1(\lambda) / |\calK_0|. 
\end{equation*}
%With $\widehat{\pi}_0\cdot |\calH|$ being an estimate of $|\calH_0| \approx (1-\epsilon)|\calH|$, 
Therefore, 
\begin{equation*}
V_0(\lambda) \approx V_1(\lambda) \cdot \frac{|\calH| \pi_0}{|\calK_0|} \approx V_1(\lambda) \cdot \frac{|\calH| \widehat{\pi}_0}{|\calK_0|}
\end{equation*}
and hence \eqref{eq:lambda-hat} can be interpreted as the smallest $\lambda$ such that an estimate of 
\begin{equation*}
\tFDP(\lambda) \equiv \frac{ V_0(\lambda) }{ 1\vee R(\lambda)}
\end{equation*}
is below $q$. 
The fact that the procedure defined above, with $\widehat{\pi}_0 = 1$, controls the FDR, is a consequence of the more general result below. 

\begin{theorem}[FDR control for knockoff procedure with $\widehat{\pi}_0 = 1$]\label{thm:fdr-control}
  Let $T_j,\ j=1,...,p+r$ be test statistics with a joint distribution
  that is invariant to permutations in the set $\calH_0\cup \calK_0$
  (that is, reordering the statistics with indices in the extended
  null set, leaves the joint distribution unchanged).  Set
  $V_0(t) = |\{j\in \calH_0: T_j \geq t\}|$, 
  $V_1(t) = |\{j\in \calK_0: T_j \geq t\}|$, and
  $R(t) = |\{j\in \calH: T_j \geq t\}|$.  Then the procedure that
  rejects $\calH_0^j$ if $T_j\geq \tau$, where
\begin{equation}\label{eq:tau}
\tau = \inf\left\{ t \in \RR: \frac{ (1 + V_1(t)) \cdot \frac{|\calH| }{1+|\calK_0|} }{1\vee R(t)} \leq q\right\}
\end{equation}
controls the $\tFDR$ at level $q$. 
In fact,
$$
\EE \left[ \frac{ V_0(\tau) }{ 1\vee R(\tau)} \right] \leq q\frac{|\calH_0|}{|\calH|}. 
$$
\end{theorem}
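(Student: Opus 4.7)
The plan is to reduce the claim to an inequality involving the ratio $V_0(\tau)/(1+V_1(\tau))$ using the definition of $\tau$, and then establish this inequality through a reverse-time martingale that exploits exchangeability of the statistics in $\calH_0\cup\calK_0$. For the reduction, I would factor
$$
\frac{V_0(\tau)}{1 \vee R(\tau)} \;=\; \frac{V_0(\tau)}{1+V_1(\tau)} \cdot \frac{1+V_1(\tau)}{1 \vee R(\tau)} \;\leq\; \frac{V_0(\tau)}{1+V_1(\tau)} \cdot \frac{q(1+|\calK_0|)}{|\calH|},
$$
where the inequality is just the stopping rule \eqref{eq:tau}. So the remaining task is to prove
$$
\EE\!\left[\frac{V_0(\tau)}{1+V_1(\tau)}\right] \;\leq\; \frac{|\calH_0|}{1+|\calK_0|}.
$$

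For the martingale, let $m = |\calH_0|+|\calK_0|$ and order the statistics in $\calH_0\cup\calK_0$ as $T_{(1)} > \cdots > T_{(m)}$; let $V_0(k),V_1(k)$ count null and knockoff indices among the top $k$ of these. The permutation-invariance hypothesis implies that, conditionally on the unordered multiset $\{T_{(1)},\dots,T_{(m)}\}$ and on all statistics indexed by $\calH\setminus\calH_0$, the pairing of positions with null/knockoff labels is uniform over its $\binom{m}{|\calH_0|}$ configurations. I would then introduce the reverse filtration $\calF_k$ ($k=m,m-1,\dots,0$) consisting of this base information plus the labels at positions $k{+}1,\dots,m$: both $V_0(k)$ and $V_1(k)$ are $\calF_k$-measurable since they equal $|\calH_0|$ (resp.\ $|\calK_0|$) minus the revealed counts, and a one-line conditional-expectation calculation using $\PP(\text{position }k\text{ is null}\mid \calF_k) = V_0(k)/k$ yields
$$
\EE\!\left[\frac{V_0(k-1)}{1+V_1(k-1)} \,\Big|\, \calF_k\right] \;=\; \frac{V_0(k)}{1+V_1(k)}.
$$
So $M_k := V_0(k)/(1+V_1(k))$ is a martingale indexed by decreasing $k$, starting at $M_m = |\calH_0|/(1+|\calK_0|)$.

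To close the argument, note that $\tau$ corresponds to the largest index $k^*\in\{0,\dots,m\}$ at which the criterion in \eqref{eq:tau} holds; equivalently $\{k^*\geq k\} = \bigcup_{k'\geq k}\{\text{criterion at }k'\}$. Writing $R(k') = |\{j\in\calH\setminus\calH_0 : T_j\geq T_{(k')}\}| + V_0(k')$, each criterion event is $\calF_{k'}$-measurable and thus lies in $\calF_k$ for $k\leq k'$, so $k^*$ is a bounded stopping time in the reverse filtration. Optional stopping then gives $\EE[M_{k^*}] = M_m = |\calH_0|/(1+|\calK_0|)$, and combined with the first display this yields the theorem. The main obstacle I anticipate is confirming that the stopping rule is measurable with respect to this reverse filtration---specifically, handling the dependence of $R$ on $V_0$ (as opposed to $V_1$, which involves only knockoff indices)---after which the martingale identity and optional stopping deliver the bound immediately.
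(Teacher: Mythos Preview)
Your proof follows the same skeleton as the paper's: factor $V_0(\tau)/(1\vee R(\tau))$ using the stopping rule, then bound $\EE[V_0(\tau)/(1+V_1(\tau))]$ via a supermartingale/optional-stopping argument that exploits exchangeability on $\calH_0\cup\calK_0$. Where you differ is in the execution of the second step. The paper works with a forward filtration in the continuous threshold, $\calF_t=\sigma(V_0(s),V_1(s),R(s):s\le t)$, and establishes the supermartingale property through a hypergeometric identity (its lemma computing $\EE[X/(1+m-X)]$ for $X\sim\text{Hyper}(n_0,n_1;m)$), then applies optional stopping and the same lemma once more at $t=0$. You instead discretize by rank within $\calH_0\cup\calK_0$, use a reverse filtration that reveals labels from the bottom up, and verify the one-step urn identity directly. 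Your route is a bit more elementary (no auxiliary lemma), while the paper's avoids the passage from thresholds to ranks.

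One small correction: the one-step identity $\EE[M_{k-1}\mid\calF_k]=M_k$ breaks when $V_1(k)=0$, since then position $k$ is surely a null and $M_{k-1}=V_0(k)-1<V_0(k)=M_k$. So $\{M_k\}$ is only a \emph{super}martingale in general, and optional stopping gives $\EE[M_{k^*}]\le M_m=|\calH_0|/(1+|\calK_0|)$ rather than equality; this is all you need. Your measurability worry is unfounded: for any $k'\ge k$, both $V_0(k')$ and $V_1(k')$ are determined by the labels at positions $k'+1,\dots,m$ (hence $\calF_k$-measurable), and $R$ at the corresponding threshold equals $V_0(k')$ plus a count over $\calH\setminus\calH_0$ that sits in the base $\sigma$-field, so each criterion event and therefore $\{k^*\ge k\}$ lies in $\calF_k$.
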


\begin{proof}
We will use the following lemma, which is proved in the Appendix. 

\begin{lemma}\label{lem:hyper}
Let $X\sim \text{Hyper}(n_0, n_1; m)$ be a hypergeometric random variable, i.e., 
$$
\PP(X = k) = \frac{{n_0 \choose k} {n_1\choose m-k}}{{n_0+n_1 \choose m}}
$$
with $m>0$ and set $Y = X/(1+m-X)$. 
Then\footnote{Here and below we adopt the convention that ${0\choose 0} = 1$, and ${n\choose k} = 0$ if $n<0$ or if either $k<0$ or $k>n$.}
$$
\EE Y = \frac{n_0}{1+n_1} \left( 1 - \frac{{n_0-1 \choose m}}{{n_0+n_1 \choose m}} \right). 
$$
In particular,
$$
\EE Y \leq \frac{n_0}{1+n_1}
$$
for any value of $m$. 
\end{lemma}

We proceed to prove the theorem. 
Denote $m_0 = |\calH_0|$ and recall that $|\calH| = p$ and $|\calK_0| = r$. 
Recall also that we are conditioning throughout on $\bbeta$ (so that, in particular, $\calH_0$ is nonrandom). 
We will show that FDR control holds conditionally on $\bbeta$ (it therefore also holds unconditionally). 

With the notation above, 
$$
\tau = \inf\left\{ t: \frac{(1+V_1(t))\cdot \frac{p}{1+r}}{1\vee R(t)} \leq q \right\}.
$$
%Using a manipulation similar to that in \citet{barber2015controlling}, 
We have 
\begin{align*}
\tFDR = \EE \left[ \frac{V_0(\tau)}{1\vee R(\tau)} \right] 
= \EE \left[ \frac{(1+V_1(\tau))\cdot \frac{p}{1+r}}{1\vee R(\tau)} \cdot \frac{V_0(\tau)}{(1+V_1(\tau))\cdot \frac{p}{1+r}} \right]
&\leq q\EE\left[ \frac{V_0(\tau)}{(1+V_1(\tau)) \cdot \frac{p}{1+r}} \right]\\
&= q \frac{1+r}{p}\EE\left[ \frac{V_0(\tau)}{1+V_1(\tau)} \right].
\end{align*}
We will show that
\begin{equation}
\EE\left[ \frac{V_0(\tau)}{1+V_1(\tau)} \right] \leq \frac{m_0}{1+r}. 
\end{equation}
Consider the filtration 
$$
\calF_t = \sigma\left( \{V_0(s)\}_{0\leq s\leq t}, \{V_1(s)\}_{0\leq s\leq t}, \{R(s)\}_{0\leq s\leq t} \right). 
$$
That is, at all times $s\leq t$, we have knowledge about the number of true null, fake null and nonnull variables with statistics larger than $s$. 
We claim that 
$$
\frac{V_0(t)}{1+V_1(t)}
$$
is a supermartingale w.r.t. $\{\calF_t\}$, and $\tau$ is a stopping time. 
Indeed, let $t>s$. 
We need to show that 
\begin{equation}\label{eq:supermart}
\EE \left[\frac{V_0(t)}{1+V_1(t)} | \calF_s\right] \leq \frac{V_0(s)}{1+V_1(s)}. 
\end{equation}
The crucial observation is that, conditional on $V_0(s)$, $V_1(s)$ {\it and} $V_0(t) + V_1(t) = m$, 
$$
V_0(t )\sim \text{Hyper}(V_0(s), V_1(s); m). 
$$
This follows from the exchangeability of the test statistics for the true and fake nulls (in other words, conditional on everything else, the order of the red markers in Figure \ref{fig:illustration} is random). 
%We demonstrate this with reference to Figure \ref{fig:illustration}: at time $s$, we are told that exactly five true nulls and exactly three fake nulls have entered before time $s$. 
%We are also told that a total of five nulls (true or fake) have entered ``even before" time $t>s$. 
%Then those five that entered ``even before" time $t>s$ must be a subset of the $5+3=8$ nulls that we know have entered at some point before $s$, and, because the red markers appear in a random order, each subset of five nulls among those eight, is equally likely. 
%Then $V_0(t)$ is counting the number of ``successes" in selecting at random five items from an urn that has five ``successes" and three ``failures". 

Now, Lemma \ref{lem:hyper} assures us that 
\begin{equation}
\EE \left[\frac{V_0(t)}{1+V_1(t)} | V_0(s), V_1(s), V_0(t) + V_1(t) = m\right] \leq \frac{V_0(s)}{1+V_1(s)}
\end{equation}
regardless of the value of $m$. 
Hence, the supermartingale property \eqref{eq:supermart} holds (and it is also clear that $\tau$ is a stopping time). 

By the optional stopping theorem for supermartingales, 
\begin{equation}\label{eq:optional}
\EE \left[\frac{V_0(t)}{1+V_1(t)} \right] \leq \EE \left[\frac{V_0(0)}{1+V_1(0)} \right] = \EE \left( \EE \left[\frac{V_0(0)}{1+V_1(0)} | V_0(0)+V_1(0) \right] \right). 
\end{equation}
But conditional on $V_0(0)+V_1(0)=m$ we have that $V_0(0 )\sim \text{Hyper}(m_0, r; m)$, 
and so, invoking Lemma \ref{lem:hyper} once again, we have
$$
\EE \left[\frac{V_0(0)}{1+V_1(0)} | V_0(0)+V_1(0)=m \right] \leq \frac{m_0}{1+r}
$$
no matter the value of $m$. 
\end{proof}

\end{subsection}

\begin{subsection}{Estimating the proportion of nulls}\label{subsec:pi-naught}

From our analysis, we see that FDR control would continue to hold if we replaced the number of hypotheses $p=|\calH|$ with the number of nulls $m_0=|\calH_0|$ in the definition of $\tau$. 
Of course, this quantity is unobservable, or, equivalently, $\pi_0 = |\calH_0|/|\calH|$ is unobservable. 
We can, however, estimate $\pi_0$ using a similar approach to that of
\citet{storey2002direct}, except that we replace calculations under
the theoretical null with the counting of knockoff variables at the ``bottom" of the Lasso path: specifically, for any $\lambda_0$ we have
$$
\frac{ |\{j\in \calH_0: T_j\leq \lambda_0\}| }{|\calH_0|} \approx \frac{ |\{j\in \calK_0: T_j\leq \lambda_0\}| }{|\calK_0|},
$$
again due to the exchangeability of null features. 
It follows that
$$
\begin{aligned}
|\calH_0| \approx |\calK_0| \cdot \frac{ |\{ j\in \calH_0:T_j\leq \lambda_0 \}| }{ |\{ j\in \calK_0:T_j\leq \lambda_0 \}| } &\leq (|\calK_0| + 1) \cdot \frac{ 1+|\{ j\in \calH_0:T_j\leq \lambda_0 \}| }{ |\{ j\in \calK_0:T_j\leq \lambda_0 \}| } \\
&\leq (|\calK_0| + 1) \cdot \frac{ 1+|\{ j\in \calH:T_j\leq \lambda_0 \}| }{ |\{ j\in \calK_0:T_j\leq \lambda_0 \}| }. 
\end{aligned}
$$
For small $\lambda_0$, we expect only few non-nulls among $\{j\in \calH: T_j\leq \lambda_0\}$ so that $|\{j\in \calH_0:T_j\leq \lambda_0\}| \approx |\{j\in \calH:T_j\leq \lambda_0\}|$. 
Hence, we propose to estimate $\pi_0$ by
\begin{equation}\label{eq:pi-naught-hat-lambda}
\widehat{\pi}_0 = \frac{(|\calK_0| + 1)}{|\calH|} \cdot \frac{ 1+|\{ j\in \calH:T_j\leq \lambda_0 \}| }{ |\{ j\in \calK_0:T_j\leq \lambda_0 \}| }. 
\end{equation}
%or, in practice, by the minimum between one and \eqref{eq:pi-naught-hat-lambda}. 
We again state our result more generally.

\begin{theorem}[FDR control for knockoff procedure with an estimate of $\widehat{\pi}_0$]\label{thm:fdr-control-est}
  Set \begin{equation}\label{eq:pi-naught-hat}
\widehat{\pi}_0 = \frac{(|\calK_0| + 1)}{|\calH|} \cdot \frac{ 1+|\{ j\in \calH:T_j\leq t_0 \}| }{ |\{ j\in \calK_0:T_j\leq t_0 \}| } 
\end{equation}
($\widehat{\pi}_0$ is set to $\infty$ if the denominator vanishes). 
  In the setting of Theorem \ref{thm:fdr-control}, the procedure that
  rejects $\calH_0^j$ if $T_j\geq \tau$, where
\begin{equation}\label{eq:tau-est}
\tau = \inf\left\{ t \geq t_0: \frac{ (1 + V_1(t)) \cdot \frac{|\calH| \widehat{\pi}_0}{1+|\calK_0|} }{1\vee R(t)} \leq q \right\}
\end{equation}
obeys
$$
\tFDR \leq q.
$$
%Furthermore, if $\widehat{\pi}_0$ is estimated by the minimum of \eqref{eq:pi-naught-hat} and 1, then, under the working assumptions stated at the beginning of Section \ref{sec:tradeoff}, and if $r = \ceil{\rho p}$ for a constant $\rho>0$, we have that 
%$$
%\lim_{n,p\to \infty} \tFDR \leq q
%$$
%where the expectation in the last inequality is taken also with
%respect to $\bbeta$, distributed as in Section
%\ref{sec:tradeoff}.\footnote{This is the only place in Section
%  \ref{sec:knockoffs} treating the coefficients as random.}

\end{theorem}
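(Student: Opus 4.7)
The plan is to reduce to the supermartingale framework developed in the proof of Theorem \ref{thm:fdr-control}, absorbing the multiplicative factor $\widehat{\pi}_0$ into an $\calF_{t_0}$-measurable quantity, and then to close the argument with a short hypergeometric identity.

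First, I would substitute \eqref{eq:pi-naught-hat} into the threshold \eqref{eq:tau-est} at $t = \tau$. Writing $V_0^-(t_0) := |\{j \in \calH_0 : T_j \leq t_0\}|$ and $V_1^-(t_0) := |\{j \in \calK_0 : T_j \leq t_0\}|$, and using $|\{j \in \calH : T_j \leq t_0\}| \geq V_0^-(t_0)$ (since $\calH_0 \subseteq \calH$) together with $V_1^-(t_0) = r - V_1(t_0)$ under continuity of the $T_j$'s, the defining inequality for $\tau$ yields the pathwise bound
\begin{equation*}
\frac{V_0(\tau)}{1 \vee R(\tau)} \;\leq\; q \cdot \frac{V_0(\tau)}{1 + V_1(\tau)} \cdot \frac{V_1^-(t_0)}{1 + V_0^-(t_0)}.
\end{equation*}

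Second, I would apply optional stopping. The supermartingale property $\EE[V_0(t)/(1+V_1(t)) \mid \calF_s] \leq V_0(s)/(1+V_1(s))$ for $t \geq s$ established in the proof of Theorem \ref{thm:fdr-control} used only the permutation-invariance of the joint law of $\{T_j : j \in \calH_0 \cup \calK_0\}$, and therefore carries over unchanged. Since $\tau \geq t_0$ is a stopping time for $\{\calF_t\}$ and the factor $V_1^-(t_0)/(1+V_0^-(t_0))$ is $\calF_{t_0}$-measurable, combining the tower property with optional stopping gives
\begin{equation*}
\EE[\mathrm{FDP}(\tau)] \;\leq\; q\,\EE\!\left[\frac{V_1^-(t_0)\,(m_0 - V_0^-(t_0))}{(1 + V_0^-(t_0))(1 + r - V_1^-(t_0))}\right],
\end{equation*}
after writing $V_0(t_0) = m_0 - V_0^-(t_0)$ and $V_1(t_0) = r - V_1^-(t_0)$.

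The main obstacle is showing the right-hand expectation is at most $1$. I would condition on $A := V_0^-(t_0) + V_1^-(t_0)$, a function only of the unordered multiset $\{T_j : j \in \calH_0 \cup \calK_0\}$; exchangeability then yields $V_0^-(t_0) \mid A \sim \mathrm{Hyper}(m_0, r; A)$, so with $X = V_0^-(t_0)$ the claim reduces to
\begin{equation*}
\EE\!\left[\frac{(A - X)(m_0 - X)}{(1+X)(1+r-A+X)} \,\bigg|\, A\right] \;\leq\; 1.
\end{equation*}
Using $(m_0-k)\binom{m_0}{k} = (k+1)\binom{m_0}{k+1}$ and $(A-k)\binom{r}{A-k} = (r-A+k+1)\binom{r}{A-k-1}$, the summand of the hypergeometric expectation rewrites as $\binom{m_0}{k+1}\binom{r}{A-k-1}/\binom{m_0+r}{A}$; reindexing $j = k+1$ and invoking Vandermonde's identity $\sum_j \binom{m_0}{j}\binom{r}{A-j} = \binom{m_0+r}{A}$ completes the bound, the reindexing omitting the boundary term at $j = \max(0, A-r)$ which is nonnegative and so only sharpens the inequality. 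The edge case $V_1^-(t_0) = 0$, corresponding to $\widehat{\pi}_0 = \infty$, is handled trivially since no rejections then occur and $\mathrm{FDP} = 0$.
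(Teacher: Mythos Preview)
Your proof is correct and follows essentially the same route as the paper's: bound $\mathrm{FDP}(\tau)$ using the defining inequality for $\tau$, apply optional stopping to the supermartingale $V_0(t)/(1+V_1(t))$ to reduce to time $t_0$ (using that the $\widehat{\pi}_0$-factor is $\calF_{t_0}$-measurable), bound $|\{j\in\calH:T_j\leq t_0\}|\geq V_0^-(t_0)$, and finish with a hypergeometric identity. The only differences are cosmetic: you substitute $\widehat{\pi}_0$ before rather than after optional stopping, you parameterize via the complement counts $V_0^-(t_0),V_1^-(t_0)$ rather than $V_0(t_0),V_1(t_0)$ (your $X=V_0^-(t_0)$ is $m_0$ minus the paper's $X=V_0(t_0)$, and $A=m_0+r-m$), and you phrase the resulting hypergeometric identity via Vandermonde with an explicit reindexing, whereas the paper records it as the closed-form identity \eqref{eq:hypergeometric-identity}; these are the same calculation and yield the same exact value $1-\binom{m_0}{m_0\wedge m}\binom{r}{m-m_0\wedge m}/\binom{m_0+r}{m}\leq 1$.
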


\begin{proof}
We use the same notation as in the proof of Theorem \ref{thm:fdr-control}. 
As before, we have that 
$$
\tFDR \leq q\EE\left[ \frac{V_0(\tau)}{(1+V_1(\tau)) \cdot \frac{p \widehat{\pi}_0}{1+r}} \right].
$$
Again, the process $\{V_0(t)/(1+V_1(t))\}_{t\geq t_0}$
starting at $t_0$, is a supermartingale w.r.t.
$\{\calF_t\}_{t\geq t_0}$ and $\tau$ is a stopping time.  Then the
optional stopping theorem gives
\begin{equation}
\begin{aligned}
\EE\left[ \frac{V_0(\tau)}{(1+V_1(\tau)) \cdot \frac{p \widehat{\pi}_0}{1+r}} \right] &\leq \EE\left[ \frac{V_0(t_0)}{(1+V_1(t_0)) \cdot \frac{p \widehat{\pi}_0}{1+r}} \right]\\
&=\EE\left[ \frac{V_0(t_0)}{1+V_1(t_0)} \cdot \frac{r-V_1(t_0)}{1+p-R(t_0)} \right]\\
&\leq \EE \left[ \frac{V_0(t_0)}{1+V_1(t_0)} \cdot
  \frac{r-V_1(t_0)}{1+m_0-V_0(t_0)} \right]. 
\end{aligned}
\end{equation}
As before, $V_0(t_0) \sim \text{Hyper}(m_0, r;m)$ conditional on $V_0(t_0) + V_1(t_0)=m$. 
In the Appendix we prove that if a random variable $X$ follows this distribution, then 
\begin{equation}\label{eq:hypergeometric-identity}
\EE \left[  \frac{X}{1+m-X} \cdot  \frac{r+X-m}{1+m_0-X} \right] = 1 - \frac{ {m_0\choose m_0\wedge m} {r\choose m-m_0\wedge m} }{ {m_0+r\choose m} } 
\end{equation}
In particular, for all $m$ the result is no more than $1$, thereby completing the proof of the theorem. 

\end{proof}

In practice, we of course suggest to use the minimum between one and
\eqref{eq:pi-naught-hat-lambda} as an estimator for $\pi_0$.  While we
cannot obtain a result similar to Theorem \ref{thm:fdr-control-est}
when the truncated estimate of $\pi_0$ is used instead of
\eqref{eq:pi-naught-hat-lambda}, we can appeal to results from
approximate message passing, in the spirit of the calculations from
Section \ref{sec:power}, to obtain an approximate asymptotic result.
Thus, let
\begin{equation*}
\widehat{\pi}_0 = 1 \ \wedge\ \left( \frac{(|\calK_0| + 1)}{|\calH|} \cdot \frac{ 1+|\{ j\in \calH:T_j\leq t_0 \}| }{ |\{ j\in \calK_0:T_j\leq t_0 \}| }\right).
\end{equation*}
and adopt the working assumptions stated at the beginning of Section \ref{sec:power}. 
As in the proof of the theorem above, 
$$
\tFDR \leq q\EE\left[ \frac{V_0(\tau)}{(1+V_1(\tau)) \cdot \frac{p \widehat{\pi}_0}{1+r}} \right],
$$
and by the same reasoning as above, 
\begin{equation}
\begin{aligned}
\EE\left[ \frac{V_0(\tau)}{(1+V_1(\tau)) \cdot \frac{p \widehat{\pi}_0}{1+r}} \right] &\leq \EE\left[ \frac{V_0(t_0)}{(1+V_1(t_0)) \cdot \frac{p \widehat{\pi}_0}{1+r}} \right]\\
&=\EE\left[ \frac{V_0(t_0)}{1+V_1(t_0)} \cdot \max \left( \frac{r-V_1(t_0)}{1+p-R(t_0)}, \frac{1+r}{p} \right) \right]\\
&\leq \EE \left[ \frac{V_0(t_0)}{1+V_1(t_0)} \cdot \max \left( \frac{r-V_1(t_0)}{1+m_0-V_0(t_0)}, \frac{1+r}{p} \right) \right]. 
\end{aligned}
\end{equation}
Now, as in equation \eqref{eq:fdp-prime} below, we have 
$$
\frac{V_0(t_0)}{p} = \frac{|\{j\in \calH_0: T_j \geq t_0\}|}{p} \approx \frac{|\{j\in \calH_0: \widehat{\beta}_j(t_0)\neq 0, \beta_j = 0\}|}{p} \to 2(1-\epsilon')\Phi(-\alpha_1'), 
$$
where $\epsilon' = \epsilon/(1 + \rho)$ and where $\alpha_1'$ is the same as in equation \eqref{eq:fdp-hat-infty-pi-naught-hat}, when replacing $\lambda_0$ with $t_0$. 
Also, using equation \eqref{eq:fdp-fake-prime}, we have
$$
\frac{V_1(t_0)}{r} = \frac{|\{j\in \calK_0: T_j \geq t_0\}|}{r} \approx \frac{|\{j\in \calK_0: \widehat{\beta}_j(t_0)\neq 0\}|}{r} \to 2\Phi(-\alpha_1'). 
$$
Treating the approximations above as equalities (in Section \ref{sec:power} we argue that these should be good approximations), we have
$$
\frac{V_0(t_0)}{1+V_1(t_0)}\bigg/\frac{p}{r} \to \frac{2(1-\epsilon')\Phi(-\alpha_1')}{2\Phi(-\alpha_1')} = 1-\epsilon'
$$
and
$$
\frac{r-V_1(t_0)}{1+m_0-V_0(t_0)}\bigg/\frac{r}{p} \to \frac{[1-2\Phi(-\alpha_1')]}{(1-\epsilon')[1-2\Phi(-\alpha_1')]} = \frac{1}{1-\epsilon'},\ \ \ \ \ \ 
\frac{1+r}{p}\bigg/\frac{r}{p} \to 1. 
$$
Hence,
$$
\frac{V_0(t_0)}{1+V_1(t_0)} \cdot \max \left( \frac{r-V_1(t_0)}{1+m_0-V_0(t_0)}, \frac{1+r}{p} \right) \to (1-\epsilon') \cdot \max(\frac{1}{1-\epsilon'}, 1) = 1. 
$$
We conclude that, up to the approximations above, 
$$
\limsup_{n,p\to \infty} \tFDR \leq q.
$$

\end{subsection}

\section{Power Analysis}\label{sec:power}
The main observation in this paper is that we can use the consequences of Theorem 1.5 in \citet{bayati2012lasso} to obtain an asymptotic TPP-FDP tradeoff diagram for the Lasso solution associated with the {\it knockoff} setup; that is, the Lasso solution corresponding to the {augmented} design $\mathbb{X}$. 
Indeed, the working assumptions required for applying Lemma \ref{sec:tradeoff} continue to hold, only with a slightly different set of parameters. 
Recall that $\mathbb{X}\in \RR^{n\times(p+r)}$ has i.i.d.~$\calN(0,1/n)$ entries. 
Taking $r = \ceil{\rho p}$ for a constant $\rho>0$, we have that $n/(p+r) \to \delta' := \delta/(1+\rho)$ as $n,p\to \infty$. 
Furthermore, because $\EE Y$ is related to $\mathbb{X}$ through 
$$
\EE Y = \mathbb{X}
\begin{bmatrix}
\bbeta \\
\mathbf{0}
\end{bmatrix},
% = \mathbb{X} [\bbeta \ \mathbf{0}]^T
$$
where $\mathbf{0} = (0,....,0)^T\in \RR^r$, we have that the {empirical distribution} of the coefficients corresponding to the {\it augmented} design converges exactly to
\begin{equation}\label{eq:pi-prime}
\Pi' = 
\begin{cases}
\Pi^*, &\text{w.p.} \ \epsilon',\\
0, &\text{w.p.} \ 1-\epsilon',
\end{cases}
\end{equation}
where $\epsilon' := \epsilon/(1 + \rho) = \lim \{p\epsilon/(p+\ceil{\rho p})\}$. 
Fortunately, it is only the {\it empirical} distribution of the coefficient vector that needs to have a limit of the form \eqref{eq:pi-prime}, for Lemma \ref{lem:infty} to follow from Theorem 1.5 of \citet{bayati2012lasso}; see also the remarks in Section 2 of \citet{su2017false}. 
The working hypothesis of Section \ref{sec:tradeoff} therefore holds with $\delta$ and $\epsilon$ replaced by $\delta'$ and $\epsilon'$. 
Recalling the definitions of $\calH, \calH_0$ and $\calK_0$ from Section \ref{sec:knockoffs}, a counterpart of Lemma \ref{lem:infty} asserts that
\begin{align}
\frac{|\{j\in \calH: \widehat{\beta}_j(\lambda)\neq 0, \beta_j = 0\}|}{p} &\longrightarrow 2(1-\epsilon)\Phi(-\alpha'), \\ \label{eq:fdp-prime}
\frac{|\{j\in \calH: \widehat{\beta}_j(\lambda)\neq 0, \beta_j \neq 0\}|}{p} &\longrightarrow \epsilon \PP (|\Pi^* + \tau'W|>\alpha'\tau')
\end{align}
and
\begin{equation}\label{eq:fdp-fake-prime}
\frac{|\{j\in \calK_0: \widehat{\beta}_j(\lambda)\neq 0\}|}{r} \longrightarrow 2\Phi(-\alpha'), 
\end{equation} 
where $\widehat{\beta}_j(\lambda)$ is the solution to \eqref{eq:lasso-til}; $W$ is a $\calN(0,1)$ variable independent of $\Pi$; and $\tau'>0,\ \alpha' >\max\{\alpha_0', 0\}$ is the unique solution to \eqref{eq:system} when $\delta$ is replaced by $\delta'$ and $\epsilon$ is replaced by $\epsilon'$. 
Hence, 
\begin{equation}\label{eq:fdp-infty-prime}
\begin{aligned}
\tFDP(\lambda) &= \frac{|\{j\in \calH_0: T_j \geq \lambda\}|}{1\vee |\{j\in \calH: T_j \geq \lambda\}|} \\
&= \frac{|\{j\in \calH_0: \widehat{\beta}_j(\lambda')\neq 0 \text{ for some } \lambda' \geq \lambda\}|}{1\vee |\{j\in \calH: \widehat{\beta}_j(\lambda')\neq 0 \text{ for some } \lambda' \geq \lambda\}|}\\
%\end{aligned}
%\end{equation}
%Now, under the assumption that variables for which $\widehat{\beta}_j(\lambda')\neq 0 \text{ for some } \lambda' > \lambda$ and $\widehat{\beta}_j(\lambda)= 0$ are extremely rare (although we cannot rule this event out completely), we can approximate the last term in \eqref{eq:fdp-infty-prime} by 
%\begin{equation}\label{eq:fdp-infty-prime-cont}
%\begin{aligned}
&\approx \frac{|\{j\in \calH_0: \widehat{\beta}_j(\lambda)\neq 0\}|}{1\vee |\{j\in \calH: \widehat{\beta}_j(\lambda)\neq 0\}|} \\
&\longrightarrow \frac{2(1-\epsilon)\Phi(-\alpha')}{2(1-\epsilon)\Phi(-\alpha') + \epsilon \PP (|\Pi^* + \tau'W|>\alpha'\tau')} 
\equiv \fdpinftyprime(t).
\end{aligned}
\end{equation}
%where $\widehat{\beta}_j(\lambda)$ is the solution to \eqref{eq:lasso-til} with $\lambda = t$; $W$ is a $\calN(0,1)$ variable independent of $\Pi$; and $\tau'>0,\ \alpha' >\max\{\alpha_0', 0\}$ is the unique solution to \eqref{eq:system} when $\lambda = t$ and when $\delta$ is replaced by $\delta'$ and $\epsilon$ is replaced by $\epsilon'$. 
Above, we allowed ourselves to approximate
$|\{j\in \calH: \widehat{\beta}_j(\lambda')\neq 0 \text{ for some }
\lambda' \geq \lambda\}|$ with
$|\{j\in \calH: \widehat{\beta}_j(\lambda)\neq 0\}|$, and likewise for
$\calH_0$.  If least-angle regression \citep{efron2004least} 
%\done{Give ref.~to LAR.} 
were
used instead of the Lasso, the two quantities would have been equal;
but because the Lasso solution path is considered here, we cannot
completely rule out the event in which a variable that entered the
path drops out, and the former quantity is in general larger.  Still,
we expect the difference to be very small, at least for large enough
values of $\lambda$ (corresponding to small FDP).  
%\ejc{You also have
%  a denominator.} 
  This is verified by the simulation of Figure
\ref{fig:simulations}: the green circles, computed using the
statistics $T_j$, are very close to the red circles, computed when
rejections correspond to variables with
$\widehat{\beta}_j(\lambda)\neq 0$.  All three simulation examples
were computed for a single realization on the original $n\times p$ design (not the augmented
design). Similar results were obtained when we repeated the experiment (i.e., for a different realization of the data). 
%\done{I would say that that this is not a special simulation. This is what we would see again and again.}

\begin{figure}[]
\centering
 \includegraphics[width=.8\textwidth]{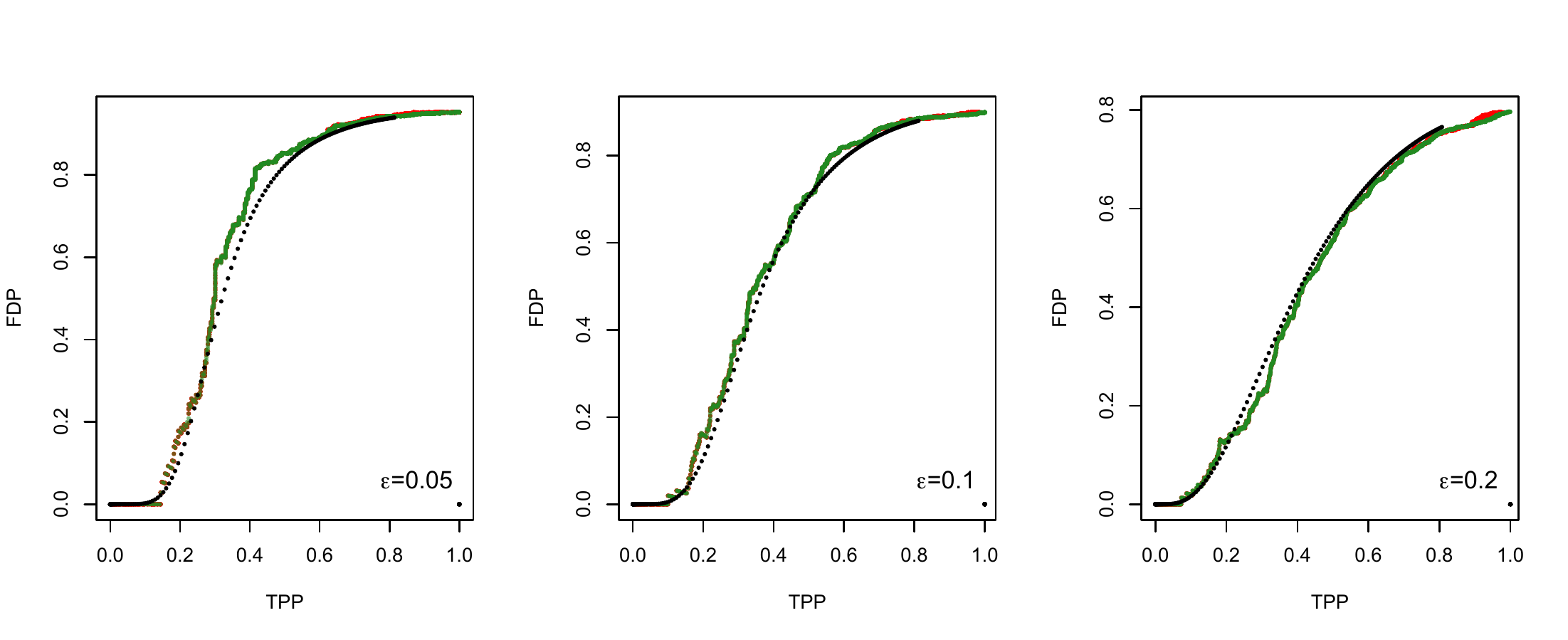}
 \caption{ Discrepancy between testing procedures for three different
   values of $\epsilon$.  In all three examples $n=p=5000$,
   $\Pi^*\sim \exp(1)$, and $\sigma = 0.5$; and TPP is plotted against
   FDP.  Green circles correspond to the procedure that uses the
   statistics $T_j$ in \eqref{eq:stat}.  Red circles correspond to the
   procedure that rejects when $\widehat{\beta}_j(\lambda)\neq 0$.
   Black circles are theoretical predictions from Section
   \ref{sec:power} with $\delta = 1$.  Each curve is is computed from
   a single realization of the data, and uses an $n\times p$ design
   matrix (i.e., not the augmented design). 
%   \ejc{Here or in the main
%     text, perhaps comment on the maximum difference between green and
%     red. AW: I did compute the max differences, and they were small, but I didn't want to include a comment because if we did, we might need to explain what exactly we are taking the difference between: the green and the red points have different x values}
       }
\label{fig:simulations}
\end{figure}

\noindent Similarly, 
\begin{equation}\label{eq:tpp-infty-prime}
\begin{aligned}
\tTPP_{\text{aug}}(\lambda) &\equiv \frac{|\{j\in \calH\setminus \calH_0: T_j \geq \lambda\}|}{1\vee |\{j\in \calH: \beta_j \neq 0\}|} \\
&\approx \frac{|\{j\in \calH \setminus \calH_0: \widehat{\beta}_j(\lambda)\neq 0\}|}{1\vee |\{j\in \calH: \beta_j \neq 0\}|}\\
&\longrightarrow \PP (|\Pi^* + \tau'W|>\alpha'\tau') \equiv \tppinftyprime(\lambda)
\end{aligned}
\end{equation}
and
\begin{equation}\label{eq:fdp-hat-infty-gen}
\begin{aligned}
\widehat{\tFDP}_{\text{aug}}(\lambda) &\equiv \frac{ (1 + |\{j\in \calK_0: T_j \geq \lambda\}|) \cdot \frac{|\calH| \widehat{\pi}_0}{1+|\calK_0|} }{|\{j\in \calH: T_j \geq \lambda\}|} \approx 
\frac{ (1 + |\{j\in \calK_0: \widehat{\beta}_j(\lambda)\neq 0\}|) \cdot \frac{|\calH| \widehat{\pi}_0}{1+|\calK_0|} }{|\{j\in \calH: \widehat{\beta}_j(\lambda)\neq 0\}|}\\
&\longrightarrow \frac{2\Phi(-\alpha')}{2(1-\epsilon)\Phi(-\alpha') +
  \epsilon \PP (|\Pi^* + \tau'W|>\alpha'\tau')} \cdot \lim
\widehat{\pi}_0, 
\end{aligned}
\end{equation}
where $W$, $\tau'>0,\ \alpha' >\max\{\alpha_0', 0\}$ are as before, and where $\widehat{\pi}_0$ is any estimate of $\pi_0$ that has a limit. 
Taking $\widehat{\pi}_0$ to be the minimum between one and \eqref{eq:pi-naught-hat}, we have that the limit in \eqref{eq:fdp-hat-infty-gen} can be approximated by 
\begin{equation}\label{eq:fdp-hat-infty-pi-naught-hat}
\begin{aligned}
&\frac{2\Phi(-\alpha')}{2(1-\epsilon)\Phi(-\alpha') + \epsilon \PP (|\Pi^* + \tau'W|>\alpha'\tau')} \\
&\cdot \left(1 \wedge \left\{ 1-\epsilon + \frac{ \epsilon [\PP (|\Pi^* + \tau'_2 W|>\alpha'_2\tau'_2) - \PP (|\Pi^* + \tau'_1 W|>\alpha'_1\tau'_1)] }{2[\Phi(-\alpha'_2) - \Phi(-\alpha'_1)]} \right\}\right) \equiv \fdphatinftyprime(\lambda),
\end{aligned}
\end{equation}
where $\tau'_1>0,\ \alpha'_1 >\max\{\alpha_0', 0\}$ is the unique
solution to \eqref{eq:system} when $\lambda = \lambda_0$ and when
$\delta$ is replaced by $\delta'$ and $\epsilon$ is replaced by
$\epsilon'$; and where $\tau'_2>0,\ \alpha'_2 >\max\{\alpha_0', 0\}$
is the limit of the unique solution to \eqref{eq:system} when $\lambda
\to 0^+$ also with $\delta'$ (resp.~$\epsilon'$) in lieu of $\delta$ (resp.~$\epsilon$).
Indeed,  
\begin{equation}\label{eq:tau-naught-I}
\begin{aligned}
\frac{|\{j\in \calH: 0<T_j< \lambda_0\}|}{p} &= \frac{|\{j\in \calH: T_j>0\}| - |\{j\in \calH: T_j\geq \lambda_0\}|}{p}\\
&\approx \frac{|\{j\in \calH: \widehat{\beta}_j(0^+)\neq 0\}| - |\{j\in \calH: \widehat{\beta}_j(\lambda_0)\neq 0\}|}{p}\\
&\longrightarrow 2(1-\epsilon)\Phi(-\alpha'_2) + \epsilon \PP (|\Pi^* + \tau'_2 W|>\alpha'_2\tau'_2)\\
&- 2(1-\epsilon)\Phi(-\alpha'_1) + \epsilon \PP (|\Pi^* + \tau'_1 W|>\alpha'_1\tau'_1) 
\end{aligned}
\end{equation}
and, by a similar calculation, 
\begin{equation}\label{eq:tau-naught-II}
\begin{aligned}
\frac{|\{j\in \calK_0: 0<T_j\leq \lambda_0\}|}{|\calK_0|} \longrightarrow 2\Phi(-\alpha'_2) - 2\Phi(-\alpha'_1);
\end{aligned}
\end{equation}
using \eqref{eq:tau-naught-I} and \eqref{eq:tau-naught-II} to compute the limit of \eqref{eq:pi-naught-hat}, we obtain \eqref{eq:fdp-hat-infty-pi-naught-hat}.

\begin{figure}[]
\centering
 \includegraphics[width=.8\textwidth]{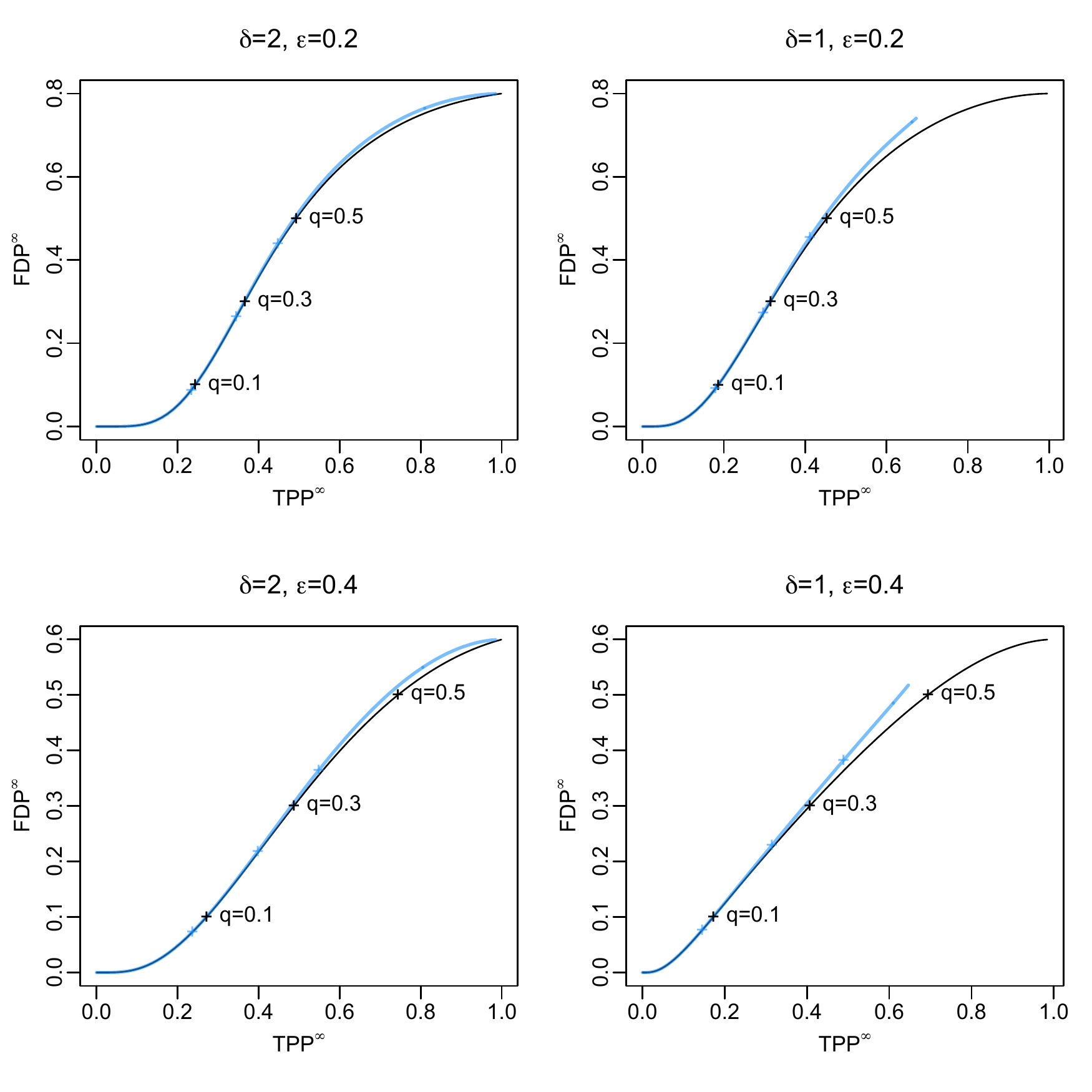}
 \caption{
% \ejc{This is our main result. Any chance we can
%     highlight/emphasize this a bit more?} 
     Tradeoff diagrams for
   $\Pi^*\sim \exp(1)$.  $\sigma = 0.5$; $\rho=1$.  Different panels
   correspond to different combinations of $\delta$ and $\epsilon$, as
   appears in the title of each of the plots.  The black (resp.~light
   blue) curve corresponds to the oracle (resp.~knockoffs).  Markers
   denote the pair $(\mathrm{tpp},\mathrm{fdp})$ of asymptotic power
   and type I error, attained for three example values of $q$: black
   for oracle, light blue for knockoff.  The knockoff procedure loses
   a little bit of power due to the estimate of $1-\epsilon$
   (proportion of true nulls).  }
\label{fig:tradeoff-exp}
\end{figure}

As $\lambda>0$ varies, a parametric curve $(\fdpinftyprime(\lambda), \tppinftyprime(\lambda))$ is traced which specifies the FDP level attainable at a specific TPP level for the Lasso solution corresponding to the {\it augmented} design, as
\begin{equation}
q_{\text{aug}}^{\Pi^*}(\tppinftyprime(\lambda); \epsilon, \delta, \sigma) = \fdpinftyprime(\lambda). 
\end{equation}

Figure \ref{fig:tradeoff-exp} shows this curve against the analogous curve $(\fdpinfty(\lambda), \tppinfty(\lambda))$, associated with the original design $\bX$ and the oracle procedure, when $\Pi^*$ is exponential with mean $1$. 
The different panels correspond to different combinations of $\delta$ and $\epsilon$. 
The proportion $\rho$ was taken to be $1$ in all scenarios, in other words, the number of knockoff variables $r$ was taken to be equal to the number of original variables $p$; we observed only very slight differences in the curves when $\rho$ was varied between $0.1$ and $1$. 
Figure \ref{fig:tradeoff-distributions} shows the two curves, $(\fdpinfty(\lambda), \tppinfty(\lambda))$ and $(\fdpinftyprime(\lambda), \tppinftyprime(\lambda))$, for different distributions $\Pi^*$.

\begin{figure}[]
\centering
 \includegraphics[width=.8\textwidth]{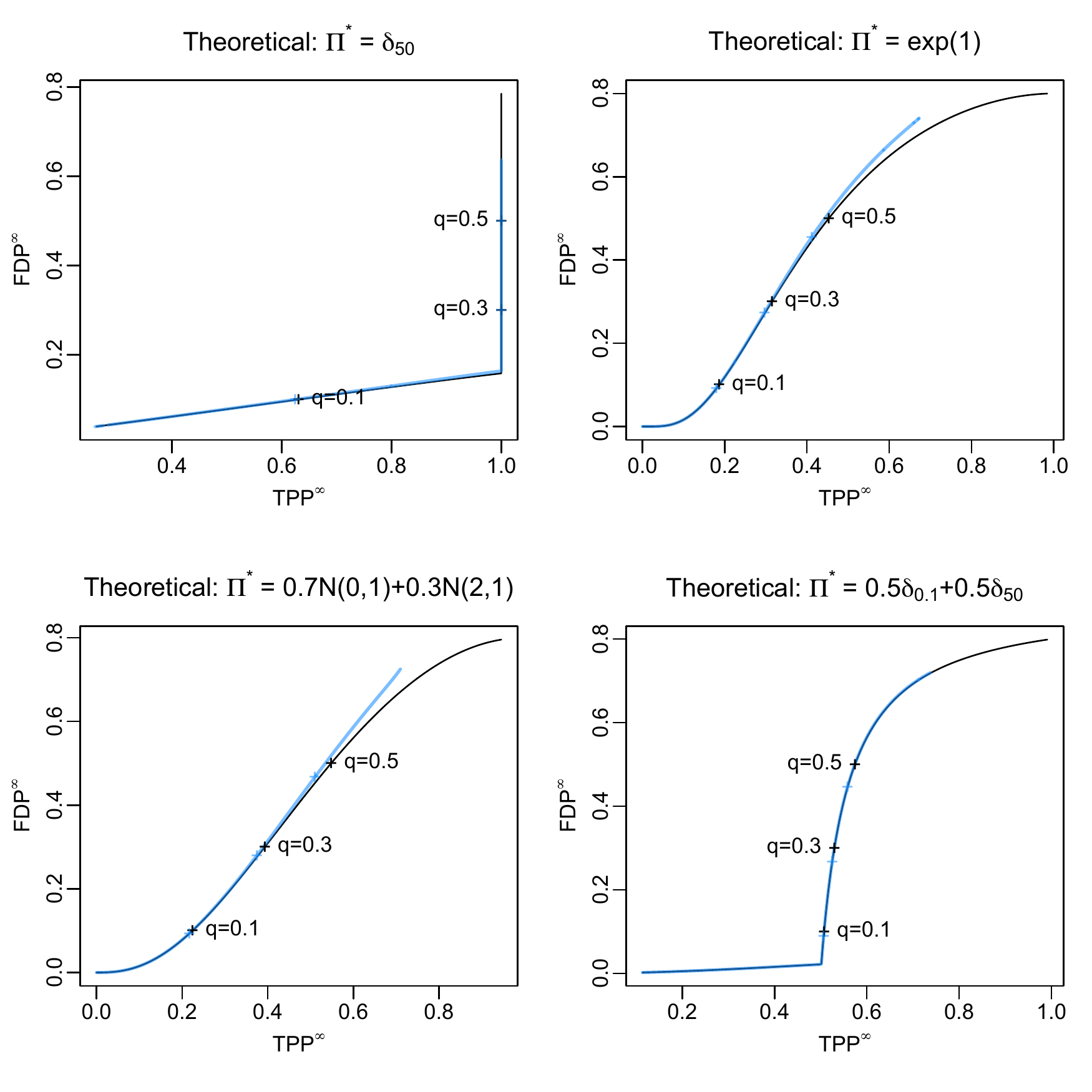}
\caption{
Tradeoff diagrams for different $\Pi^*$, with $\epsilon = 0.2$, $\delta = 1$ and $\sigma = 0.5$. 
}
\label{fig:tradeoff-distributions}
\end{figure}

In all situations depicted in Figures \ref{fig:tradeoff-exp} and \ref{fig:tradeoff-distributions}, the light blue curve, corresponding to the augmented setup, lies very close to the black curve, which corresponds to the original design. 
Hence, augmenting the design with knockoffs seems to have little effect on the tradeoff diagram, or, the achievable power for a given FDP level, at least for FDP levels small enough to be of interest. 
For example, if we set $q=0.05$, then there exist values $\lambda$ and $t'$ such that $\fdpinfty(\lambda) = \fdpinftyprime(\lambda') = 0.05$ and $\tppinfty(\lambda) \approx \tppinftyprime(\lambda')$. 
Of course, the knockoff procedure achieves this without knowledge of $\epsilon$ and $\Pi^*$, while the oracle relies on knowing these quantities in choosing $\lambda$. 
Note, still, that for a given input $q$, the knockoff procedure essentially (asymptotically) chooses $\lambda'$ such that $\fdphatinftyprime(\lambda')$, not $\fdpinftyprime(\lambda')$, is equal to $q$. 
Because the former uses a conservative estimate of $1-\epsilon$, $\fdphatinftyprime(\lambda')$ is usually larger than $\fdpinftyprime(\lambda')$ and, consequently, the power attained by the knockoff procedure is a little lower. 
This gap is depicted by the ``+" markers in the plots, each pair of blue and black markers corresponding to a particular value of $q$. 
For example, for $\delta = 1$ and $\epsilon = 0.2$, oracle power is $0.187$ and knockoffs power is $0.18$ at $q = 0.1$. 
The value of $t_0$ used is $0.1$. 
Figure \ref{fig:intro} gives an alternative representation (see caption). 
%\done{Perhaps, give numbers? At say $q = 0.1$, oracle power is bla, knockoffs power is bla.}

\section{Prediction}\label{sec:prediction}

Armed with a procedure that controls the FDR for the model
\eqref{eq:model}, we now explore utilizing knockoffs for estimation
purposes (the prediction error and the estimation error are equivalent
here because we have uncorrelated predictors).  Our pursuit is to a
large degree inspired by the work of \citet{abramovich2006special},
which for the first time rigorously linked FDR control and estimation
error.
%Indeed, \citet{abramovich2006special} showed formally that in the Gaussian sequence model, FDR (hard-) thresholding adapts to unknown sparsity and achieves asymptotic minimaxity in estimation, provided that [...]. 
%Turning to the linear model, it is natural to ask if a procedure that controls the FDR still exhibits competitive performance in terms of the risk. 
%In fact, the SLOPE estimator \citep{bogdan2015slope} was in part motivated by such prospects, even though strict FDR control and asymptotic minimaxity have so far only been proven for some very special cases of the linear model and in a different sparsity regime than ours \citep{su2016slope}. 
At the same time, the working assumptions we adopt in the current paper allow us to take further advantage of the results of \citet{bayati2012lasso}, namely, to asymptotically analyze not only the power or the knockoff procedure but also the {\it estimation} error associated with it. 
As in the preceding analysis of testing errors, the (normalized) risk of the Lasso estimator at a fixed $\lambda$ tends in probability to a simple limit. 
Specifically, by Corollary 1.6 of \citet{bayati2012lasso}, we have that, for a given pair of $\epsilon$ and $\Pi^*$, and for fixed $\lambda$, 
\begin{equation}
\lim \frac{1}{p} \| \widehat{\bbeta}(\lambda) - \bbeta \|^2 \longrightarrow \EE\left[ \left(\eta_{\alpha\tau}(\Pi + \tau W) - \Pi \right)^2 \right] \equiv R^{\infty}(\lambda; \Pi^*, \epsilon),
\end{equation}
where $W\sim \calN(0,1)$ independently of $\Pi$, and $\tau>0,\ \alpha >\max\{\alpha_0, 0\}$ is the unique solution to \eqref{eq:system}. 
Because we have a tractable form for the limiting risk, we can minimize it over $\lambda$ to obtain an {\it oracle} choice for $\lambda$,
\begin{equation}\label{eq:oracle-lambda}
\lambda^{\infty}_{\text{OL}}(\Pi^*, \epsilon) = \argmin_{\lambda} R^{\infty}(\lambda; \Pi^*, \epsilon). 
\end{equation}
The oracle choice of $\lambda$ depends on $\Pi^*$ and on $\epsilon$ and therefore does not produce a legitimate estimator for $\bbeta$, but it sets a benchmark for evaluating the performance of a procedure that plugs in any (possibly data-dependent) value for $\lambda$. 
Consider now a fixed $q$ and the knockoff procedure from Section \ref{subsec:pi-naught}, which selects the $j$-th variable if $T_j\geq \tau$, where $\tau$ is given by \eqref{eq:lambda-hat} and $\widehat{\pi}_0$ by \eqref{eq:pi-naught-hat} and $\Lambda = [t_0,\infty)$. 
This is a legal variable selection procedure (it does not depend on $\Pi^*$ or on $\epsilon$) with the FDR control guarantees of Theorem \ref{thm:fdr-control-est}. 
In the limit as $n,p\to \infty$, it corresponds to a choice of $\lambda$ defined by 
\begin{equation*}
\fdphatinftyprime(\lambda) = q
\end{equation*}
where $\fdphatinftyprime(\lambda)$ is given in \eqref{eq:fdp-hat-infty-pi-naught-hat}. 
Call this the ``knockoff" choice for $\lambda$ and denote it by $\lambda^{\infty}_{\text{KO}}(q)$. 
Then the quantity
\begin{equation}\label{eq:inflation}
\nu(q) \equiv \sup_{\Pi^*} \left\{ \frac{R^{\infty}(\lambda^{\infty}_{\text{KO}}(q); \Pi^*, \epsilon)}{R^{\infty}(\lambda^{\infty}_{\text{OL}}(\Pi^*, \epsilon); \Pi^*, \epsilon)} \right\} = 
\sup_{\Pi^*} \left\{ \frac{R^{\infty}(\lambda^{\infty}_{\text{KO}}(q); \Pi^*, \epsilon)}{\inf_{\lambda}R^{\infty}(\lambda; \Pi^*, \epsilon)} \right\}
\end{equation}
is the worst-case (in $\Pi^*$) inflation of the asymptotic risk due to using $\lambda^{\infty}_{\text{KO}}(q)$ instead of the optimal choice for $\lambda$ \citep[our use of the term {\it risk inflation} alludes to the similarity to the criterion proposed in][]{foster1994risk}. 
It is not clear that $\nu(q)$ is at all finite; but if it were true that there exists a choice of $q$ such that for all $\epsilon, \delta, \sigma$, $\nu(q)$ is bounded by $1 + \Delta$ for small $\Delta$, it would be rather reassuring. 

Even for fixed $\epsilon, \delta$ and $\sigma$, computing \eqref{eq:inflation} is not trivial because we need to maximize the ratio over all distributions $\Pi^*$; neither were we able to {\it bound} \eqref{eq:inflation} in general.  
The problem is much simpler in the case where $\Pi^*$ is a point mass at some nonzero (positive) value $\lambda$, because in this case the risk inflation is a univariate function. 
In Figure \ref{fig:point} we plot the risk inflation as a function of $\lambda$ for $\epsilon = 0.1$ and $\delta = 1, \sigma = 0.5$.
The maximum risk inflation is $1.177$ and attained at $t=1.9$, which is quite interesting as this is a case of neither very low nor very high ``signal to noise" ratio. 
Changing $\epsilon$ from $0.1$ to $0.05$ yielded maximum risk inflation of only $1.00032$.

\begin{figure}[H]
\centering
 \includegraphics[width=.65\textwidth]{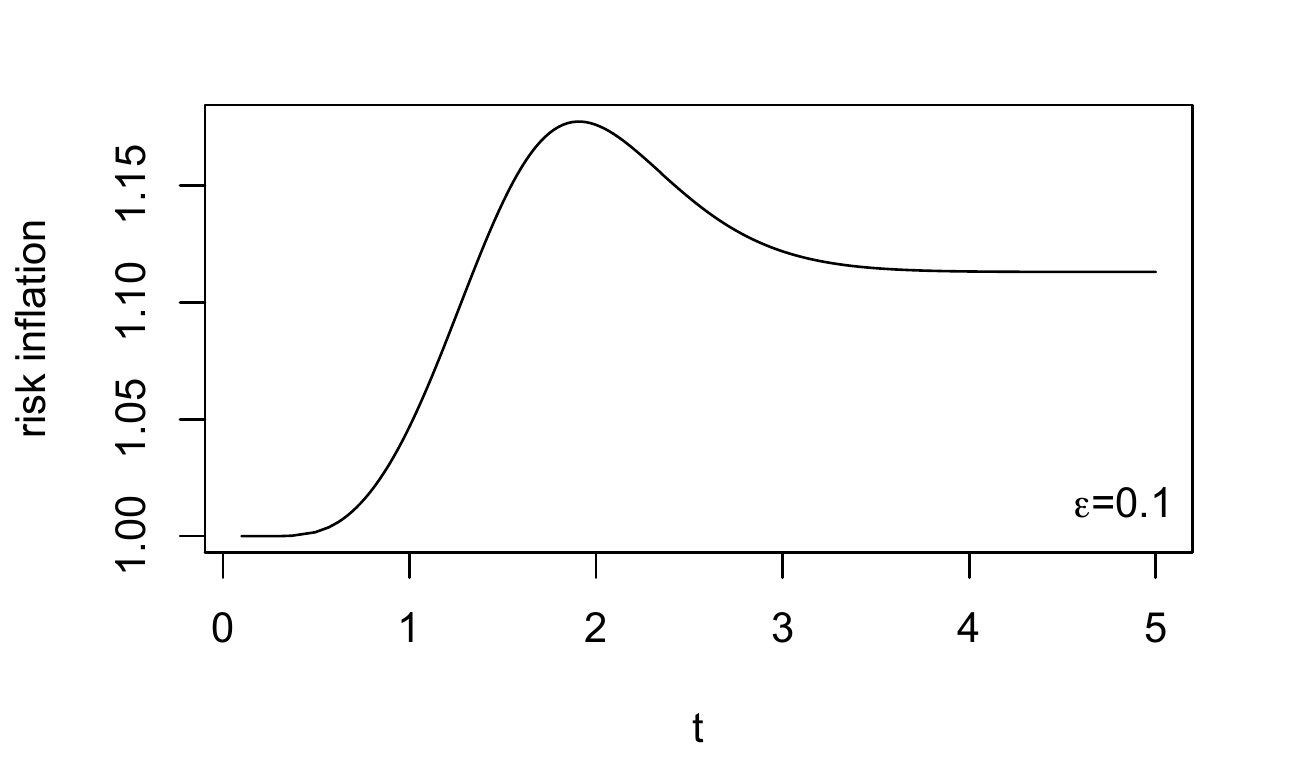}
\caption{
Risk inflation for point mass at $\lambda$, as a function of $\lambda$. 
$\epsilon = 0.1, \ \delta = 1,\ \sigma = 0.5$. 
The maximum $1.177$ is attained at $t=1.9$. 
}
\label{fig:point}
\end{figure}

We continue our investigation with an empirical study. 
Hence, we chose a large number of members of a flexible parametric family of distributions, and evaluate the risk inflation on each. 
Specifically, we took $\delta = 1,\ \sigma = 1$ and $\epsilon = 0.1$, and consider the family of mixtures of Gamma distributions
$$
\left\{  \sum_{i=1}^8 w_i \text{Gam}(\alpha_i): 0\leq w_i\leq 1, \sum_{i=1}^8 w_i = 1  \right\}
$$
where $\boldsymbol{\alpha} = (0.1, 0.8, 1.5, 2.2, 2.9, 3.6, 4.3, 5.0)$ and where $\text{Gam}(\alpha)$ is the Gamma distribution with shape parameter $\alpha$ and rate one. 
We then consider a restriction of the original family, namely, the subset 
$$
\left\{  \sum_{i=1}^8 (u_i/\sum u_j) \text{Gam}(\alpha_i): u_i \in \{0,1,2,3,4\}  \right\}
$$
which includes 383,809 unique members. 
A few of these distributions are plotted in color in Figure \ref{fig:mix} (the black curves correspond to distributions which belong to the original mixture family but not the restricted one).

Next, for each of these 383,809 distributions $\Pi^*$ we compute the ratio
$$
\frac{R^{\infty}(\lambda^{\infty}_{\text{KO}}(q); \Pi^*, \epsilon)}{\inf_{\lambda}R^{\infty}(\lambda; \Pi^*, \epsilon)}
$$
numerically. 
For $q=0.7$, this ratio was always below $1.13$. 
Figure \ref{fig:hist-ratio} shows a histogram of the 383,809 values: 
the distribution with the maximum risk inflation appears in Figure \ref{fig:mix} in red; the ``runner-ups" look similar. 
We also computed the ratio for some distributions that belong to the original mixture family but not the restricted one, and these are shown in black in Figure \ref{fig:mix}; the ratio for both was less than $1.12$. 
To summarize, for $q=0.7$ the risk inflation did not exceed $17\%$ in any of our examples.

\begin{figure}[H]
\centering
 \includegraphics[width=.5\textwidth]{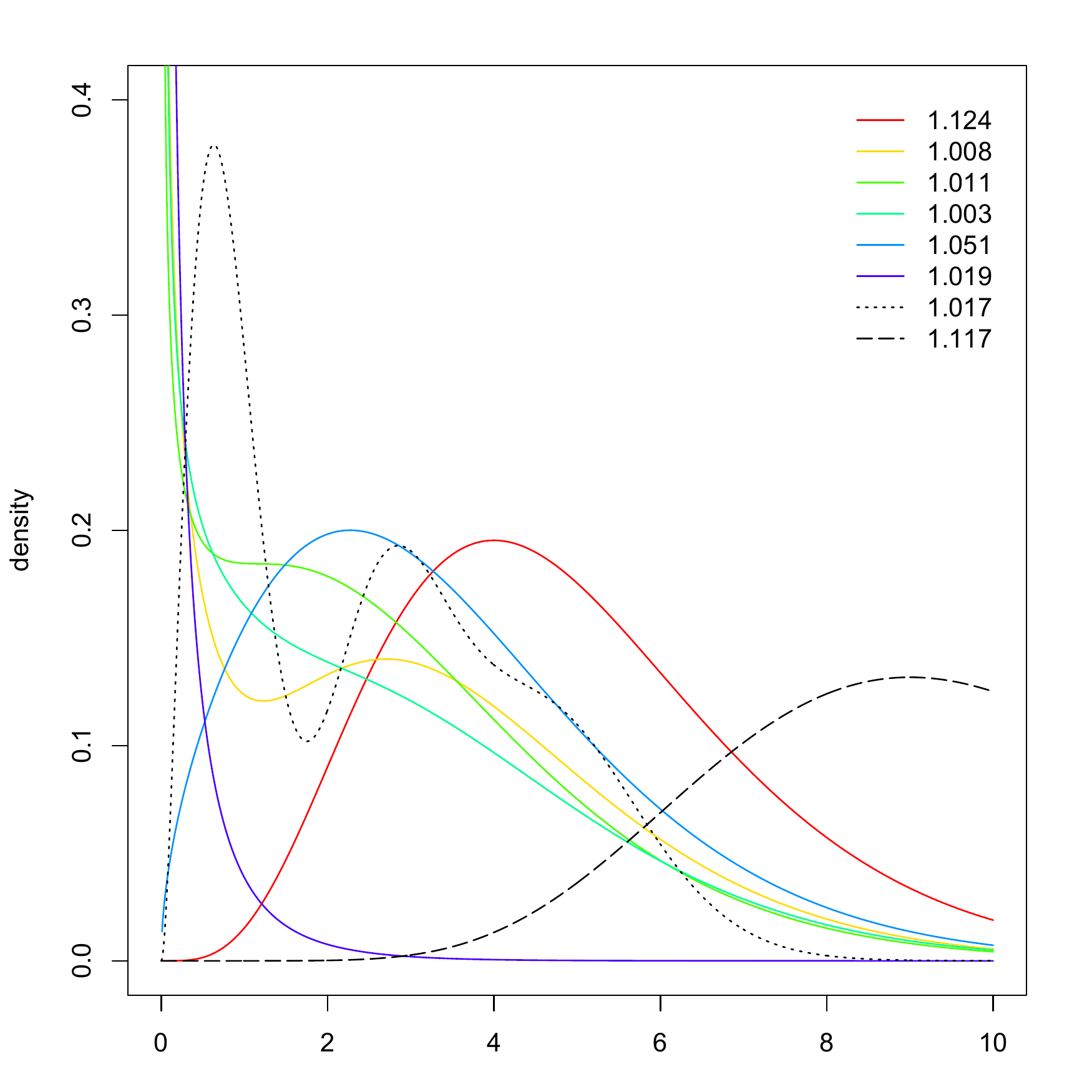}
 \caption{ Gamma mixture distributions.  The colored lines represent
   distributions in the restricted family of 383,809 distributions.
   Black lines are distributions that belong to the original mixture
   family, but not the restricted one.  The red curve corresponds to
   the distribution that attained the maximum ratio among the 383,809
   distributions.  Numbers in legend are risk inflation values
   corresponding to the different distributions in the plot.  }
\label{fig:mix}
\end{figure}

\begin{figure}[H]
\centering
 \includegraphics[width=.65\textwidth]{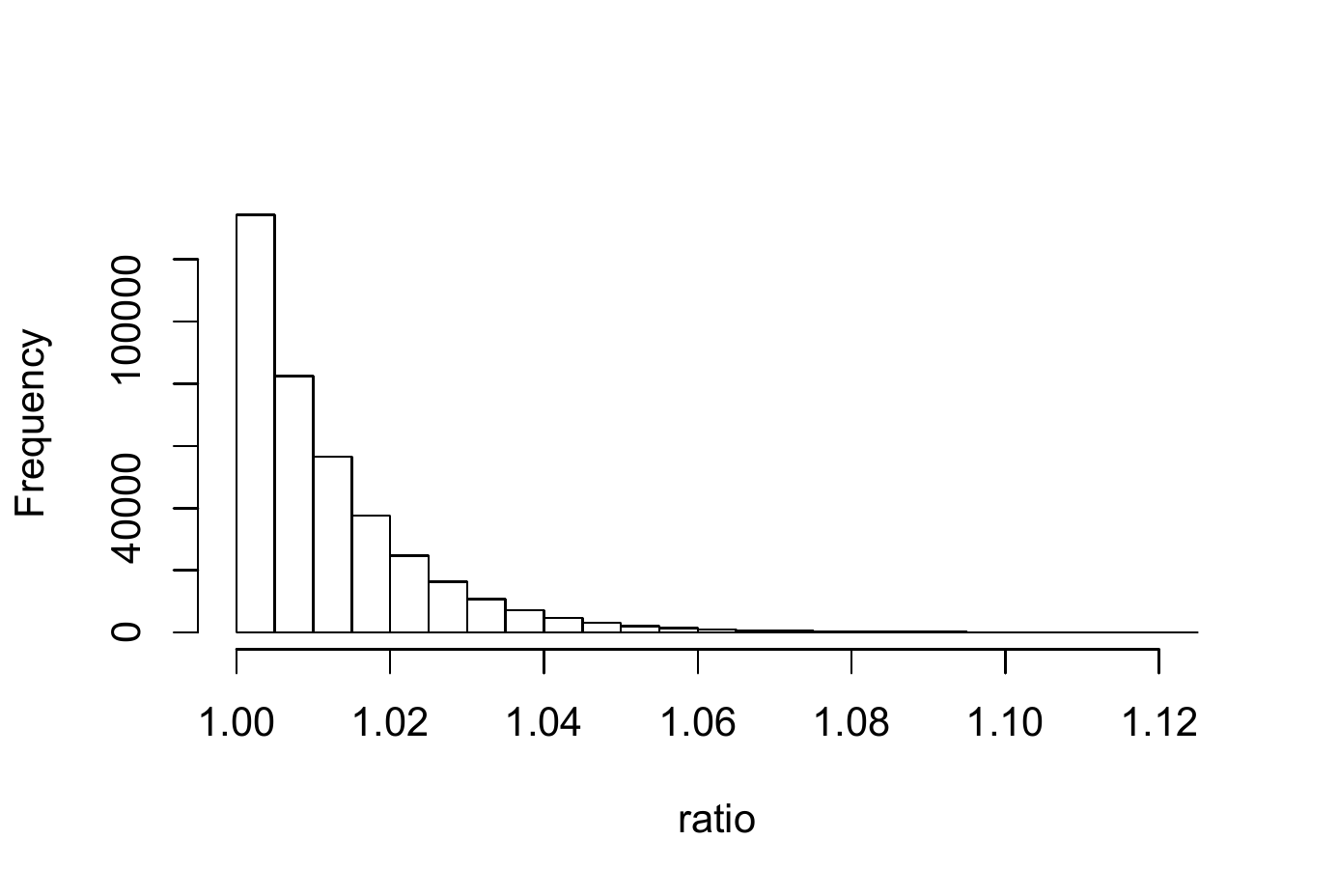}
\caption{
Histogram of the 383,809 computed risk ratios.
}
\label{fig:hist-ratio}
\end{figure}

%\ejc{Do we have numbers for $\epsilon = 0.05?$ AW: We don't---we'd need to redo this on the cluster if we wanted}

\section{Discussion}\label{sec:discussion}
The Lasso is nowadays frequently used for selecting variables in a
linear model.  Here we have considered the controlled variable
selection problem, where a valid procedure must satisfy that the
expected proportion of incorrectly selected variables (the FDR) is
kept below a pre-specified level $q$. 
This connects our work to that of \citet{tibshirani2016exact} and 
\citet{fithian2015selective}, who consider hypothesis testing for procedures 
which sequentially select variables into a model. 
However, the null hypothesis tested in these papers is different, corresponding 
to an ``incremental" test versus a ``full-model" test; see the elaborate 
and illuminating discussion in \citet{fithian2015selective}. 

%\begin{figure}[H]
%\centering
% \includegraphics[width=.8\textwidth]{tradeoff-distributions.pdf}
%\caption{
%Tradeoff diagrams for different $\Pi^*$, with $\epsilon = 0.2$, $\delta = 1$ and $\sigma = 0.5$. 
%}
%\label{fig:tradeoff-distributions}
%\end{figure}

We have proposed an adaptation
of the Lasso-based knockoff procedure of
\citet{barber2015controlling}, which has provable FDR guarantees.
Under the working assumptions, which include a Gaussian i.i.d.~design,
it is possible to analyze the asymptotic tradeoff between the false
discovery proportion and power of the proposed procedure.  Also, while
the i.i.d.~$\calN(0,1)$ design is perhaps not realistic, our aim is to
compare knockoffs against the best possible mechanism for selection
over the lasso path, and such a mechanism has been developed
explicitly only for the i.i.d.~$\calN(0,1)$ setting.  We have
demonstrated that the proposed procedure comes close in performance to
an oracle procedure, which uses knowledge about the underlying signal
to choose the Lasso tuning parameter in such a way that FDP is
controlled.

Moving away from the testing problem, we have also demonstrated that FDR calibration with knockoffs has potential in achieving good prediction performance, adaptively in the sparsity $\epsilon$ and in $\Pi^*$. 

In this article the focus was on procedures tied to the Lasso path. 
There is certainly room for investigating the use of other test statistics to measure feature importance. 
Using the Lasso statistics $T_j$ in \eqref{eq:stat} for the knockoff procedure might be suboptimal, as can be seen in Figure \ref{fig:tradeoff-distributions} and as was implied in \citet{su2017false}: in the Lasso path false discoveries are interspersed between true ones. 
As a matter of fact, an optimal oracle procedure exists for the setup considered in the current paper. 
Namely, the procedure which rejects the hypotheses with $T^*_j \leq \tau$ where
$$
T_j^* = \PP(\beta_j = 0|\by, \bX)
$$
is the ``local" false discovery rate at $(\by,\bX)$ and where $\tau$ is selected so that the procedure has FDR control at $q$. 
\citet{candes2016panning} have already proposed to use similar Bayesian statistics. 
Note that $T_j^*$ defined above depends on $\Pi^*$ and $\epsilon$, hence we associate it with an oracle. 
It would be interesting to think if there is a way to mimic this oracle with a computationally feasible empirical Bayes procedure.

Even if we commit to Lasso statistics, it might be possible to increase power by ordering  the variables differently. 
For example, \citet{candes2016panning} suggested to use the absolute value of the Lasso coefficient, evaluated at the cross-validated $\lambda$, to measure feature importance. 
We leave for future work the investigation of the potential gain in power in using alternative knockoff statistics.

\appendix

\section{Proofs}

\begin{proof}[Proof of Lemma \ref{lem:hyper}]
By definition,
$$
\EE Y = \sum_{1\leq k\leq m} \frac{k}{1+m-k} \cdot \frac{{n_0 \choose k} {n_1\choose m-k}}{{n_0+n_1 \choose m}}
$$
Assuming $n_0, n_1>0$, for $1\leq k\leq m$ we have 
$$
k {n_0\choose k} = n_0{n_0-1 \choose {k-1}},\ \ \ \ \ \ \ \ \ \ \frac{1}{1+m-k}{n_1\choose {m-k}} = \frac{1}{n_1 + 1} {n_1+1 \choose {m-k+1}}.
$$
Therefore, 
$$
\EE Y = \frac{n_0}{1+n_1}\cdot \frac{1}{ {n_0+n_1\choose m} } \sum_{1\leq k \leq m} {n_0-1 \choose {k-1}} \cdot {n_1+1 \choose {m-k+1}} = \frac{n_0}{1+n_1} \left( 1 - \frac{ {n_0-1 \choose m} }{ {n_0+n_1\choose m} } \right). 
$$
If $n_0 = 0$, then $X=Y=0$. 
If $n_1 = 0$, then $X=Y=m$. 
In both cases the claimed result still holds. 
\end{proof}

\bigskip

\begin{proof}[Proof of identity \eqref{eq:hypergeometric-identity}]
Assuming $m_0,r>0$, we have 
\begin{align*}
&\EE \left( \frac{X}{1+m-X} \cdot \frac{r+X-m}{1+m_0-X} \right) = \sum_{ k = 1\vee (m-r+1) }^{m\wedge m_0} \frac{k}{1+m-k} \cdot \frac{r+k-m}{1+m_0-k} \cdot \frac{{m_0\choose k}{r\choose {m-k}}}{{m_0+r\choose m}}\\
&=\sum_{ k = 1\vee (m-r+1) }^{m\wedge m_0} \frac{m_0!}{(k-1)!(m_0-k+1)!} \cdot \frac{r!}{(m-k+1)!(r-m+k-1)!} \cdot \frac{m!(m_0+r-m)!}{(m_0+r)!}\\
&=\sum_{ k = 1\vee (m-r+1) }^{m\wedge m_0} \frac{ {m_0\choose {k-1}} {r\choose m-k+1} }{ {m_0+r\choose m} } = 1 - \frac{ {m_0\choose m_0\wedge m} {r\choose m-m_0\wedge m} }{ {m_0+r\choose m} } 
\end{align*}

If $m_0 = 0$, then $X=0$ and the identity still holds. 
If $r = 0$, then $X=m$ and the identity again continues to holds. 
\end{proof}

\section{Computing the curve $q^{\Pi^*}$}

To compute $q^{\Pi^*}(\tppinfty(\lambda); \epsilon, \delta, \sigma)$ for specific $\Pi^*$ and $\epsilon$, we need to be able to compute the pair $(\fdpinfty(\lambda),\tppinfty(\lambda))$ for each $\lambda>0$. 
This, in turn, requires to find for each $\lambda$ the pair $(\alpha, \tau)$ given by the system of equations \eqref{eq:system}. 
Hence, the functionals  
$$
f(\Pi^*, \epsilon) := \EE(\eta_{\alpha\tau}(\Pi + \tau W) - \Pi)^2 = (1-\epsilon)\EE(\eta_{\alpha\tau}(\tau W) )^2 + \epsilon \EE(\eta_{\alpha\tau}(\Pi^* + \tau W) - \Pi^*)^2
$$
and 
$$
g(\Pi^*, \epsilon) := \PP(|\Pi + \tau W|>\alpha\tau) = 2(1-\epsilon)\Phi(-\alpha) + \epsilon \PP(|\Pi^* + \tau W|>\alpha\tau),
$$ 
which appear in \eqref{eq:system}, are evaluated numerically for a given CDF $F_{\Pi^*}$ of $\Pi^*$. 

For a fixed value of $\lambda$, our code takes as an input $\epsilon$ and a CDF $F_{\Pi}$ of $\Pi$ (as well as $\delta$ and $\sigma^2$) and returns a solution for $(\alpha, \tau)$ in \eqref{eq:system}. 
Thus, for given values of $\alpha$ and $\tau$, we need to be able to compute 
$
f(\Pi, \epsilon) %:= \EE(\eta_{\alpha\tau}(\Pi + \tau W) - \Pi)^2 = (1-\epsilon)\EE(\eta_{\alpha\tau}(\tau W) )^2 + \epsilon \EE(\eta_{\alpha\tau}(\Pi + \tau W) - \Pi)^2
$
and 
$
g(\Pi, \epsilon) %:= \PP(|\Pi + \tau W|>\alpha\tau) = 2(1-\epsilon)\Phi(-\alpha) + \epsilon \PP(|\Pi + \tau W|>\alpha\tau),,
$ 
defined in Section \ref{sec:tradeoff}, 
as functionals of $F_{\Pi}$. 
We explain how we implement this. 

Let $R(\mu):= \EE \left(\eta_{\alpha\tau}(\mu + \tau W) - \mu\right)^2$ and $Q(\mu):=\PP\left( |\mu + \tau W|>\alpha\tau \right)$. 
Writing $f(\Pi, \epsilon)$ as a Riemann-Stieltjes integral and using integration by parts, we have
\begin{align*}
f(\Pi, \epsilon) &:= \EE(\eta_{\alpha\tau}(\Pi + \tau W) - \Pi)^2 = \int_{-\infty}^\infty R(\mu) \ dF_{\Pi} \\
&= R(\infty)F_{\Pi}(\infty) - R(-\infty)F_{\Pi}(-\infty) - \int_{-\infty}^\infty F_{\Pi}(\mu) \ dR(\mu)\\
&= R(\infty)F_{\Pi}(\infty) - R(-\infty)F_{\Pi}(-\infty) - \int_{-\infty}^\infty F_{\Pi}(\mu) R'(\mu) d\mu\\
&= 1 + \alpha^2\tau^2 - \int_{-\infty}^\infty F_{\Pi}(\mu) [ 2\mu \left( \Phi\left( \alpha - \mu/\tau \right) \right) - \left( \Phi\left( -\alpha - \mu/\tau \right) \right) ] d\mu
\end{align*}
where we substituted a closed-form expression for the derivative of the risk of soft thresholding \citep[see, e.g., ][Ch. 2.7]{johnstone2011gaussian}. 
The last expression is evaluated using numerical integration. 

Similarly, 
\begin{align*}
g(\Pi, \epsilon) &:= \PP(|\Pi + \tau W|>\alpha\tau) = \int_{-\infty}^\infty Q(\mu) \ dF_{\Pi} \\
&= Q(\infty)F_{\Pi}(\infty) - Q(-\infty)F_{\Pi}(-\infty) - \int_{-\infty}^\infty F_{\Pi}(\mu) \ dQ(\mu)\\
&= Q(\infty)F_{\Pi}(\infty) - Q(-\infty)F_{\Pi}(-\infty) - \int_{-\infty}^\infty F_{\Pi}(\mu) Q'(\mu) \ d\mu\\
&= 1- 1/\tau \int_{-\infty}^\infty F_{\Pi}(\mu) [ \phi(\alpha - \mu/\tau) - \phi(-\alpha - \mu/\tau) ] \ d\mu, 
\end{align*}
and, again, the last expression is evaluated using numerical integration.

\bibliographystyle{plainnat}
\bibliography{/Users/assafweinstein/Dropbox/Research/References.bib}

\end{document}